\documentclass[aps,nopacs,nokeys,superscriptaddress,11pt,twoside,notitlepage]{revtex4-1}

\usepackage{graphicx,epic,eepic,epsfig,amsmath,latexsym,amssymb,verbatim,color}

\usepackage{txfonts}
\usepackage[usenames,dvipsnames]{pstricks}

\usepackage{array}
\usepackage{color}

\newtheorem{definition}{Definition}
\newtheorem{proposition}[definition]{Proposition}
\newtheorem{lemma}[definition]{Lemma}

\newtheorem{theorem}[definition]{Theorem}

\newtheorem{remark}[definition]{Remark}

\def\squareforqed{\hbox{\rlap{$\sqcap$}$\sqcup$}}
\def\qed{\ifmmode\squareforqed\else{\unskip\nobreak\hfil
\penalty50\hskip1em\null\nobreak\hfil\squareforqed
\parfillskip=0pt\finalhyphendemerits=0\endgraf}\fi}
\def\endenv{\ifmmode\;\else{\unskip\nobreak\hfil
\penalty50\hskip1em\null\nobreak\hfil\;
\parfillskip=0pt\finalhyphendemerits=0\endgraf}\fi}
\newenvironment{proof}{\noindent \textbf{{Proof~} }}{\qed}

\mathchardef\ordinarycolon\mathcode`\:
\mathcode`\:=\string"8000
\def\vcentcolon{\mathrel{\mathop\ordinarycolon}}
\begingroup \catcode`\:=\active
  \lowercase{\endgroup
  \let :\vcentcolon
  }

\newcommand{\nc}{\newcommand}
\nc{\rnc}{\renewcommand}

\nc{\ox}{\otimes}
\nc{\dg}{\dagger}
\nc{\dn}{\downarrow}
\nc{\cA}{{\cal A}}
\nc{\cB}{{\cal B}}
\nc{\cC}{{\cal C}}
\nc{\cD}{{\cal D}}
\nc{\cE}{{\cal E}}
\nc{\cF}{{\cal F}}
\nc{\cG}{{\cal G}}
\nc{\cH}{{\cal H}}
\nc{\cI}{{\cal I}}
\nc{\cJ}{{\cal J}}
\nc{\cK}{{\cal K}}
\nc{\cL}{{\cal L}}
\nc{\cM}{{\cal M}}
\nc{\cN}{{\cal N}}
\nc{\cO}{{\cal O}}
\nc{\cP}{{\cal P}}
\nc{\cR}{{\cal R}}
\nc{\cS}{{\cal S}}
\nc{\cT}{{\cal T}}
\nc{\cU}{{\cal U}}
\nc{\cX}{{\cal X}}
\nc{\cZ}{{\cal Z}}
\nc{\csupp}{{\operatorname{csupp}}}
\nc{\qsupp}{{\operatorname{qsupp}}}
\nc{\var}{{\operatorname{var}}}
\nc{\rar}{\rightarrow}
\nc{\lrar}{\longrightarrow}
\nc{\polylog}{{\operatorname{polylog}}}
\nc{\1}{I}
\nc{\wt}{{\operatorname{wt}}}
\nc{\av}[1]{{\left\langle {#1} \right\rangle}}

\def\a{\alpha}
\def\b{\beta}

\nc{\RR}{{{\mathbb R}}}
\nc{\CC}{{{\mathbb C}}}
\nc{\FF}{{{\mathbb F}}}
\nc{\NN}{{{\mathbb N}}}
\nc{\ZZ}{{{\mathbb Z}}}
\nc{\PP}{{{\mathbb P}}}
\nc{\QQ}{{{\mathbb Q}}}
\nc{\UU}{{{\mathbb U}}}
\nc{\EE}{{{\mathbb E}}}
\nc{\id}{{\operatorname{id}}}

\nc{\CHSH}{{\operatorname{CHSH}}}

\nc{\be}{\begin{equation}}
\nc{\ee}{{\end{equation}}}
\nc{\bea}{\begin{eqnarray}}
\nc{\eea}{\end{eqnarray}}
\nc{\<}{\langle}
\rnc{\>}{\rangle}
\nc{\Hom}[2]{\mbox{Hom}(\CC^{#1},\CC^{#2})}
\nc{\rU}{\mbox{U}}

\nc{\ob}[1]{#1}

\nc{\SEP}{{\text{SEP}}}
\nc{\sep}{{\text{sep}}}
\nc{\LOCC}{{\text{LOCC}}}
\nc{\PPT}{{\text{PPT}}}
\nc{\EXT}{{\text{EXT}}}
\nc{\Sym}{{\operatorname{Sym}}}

\newcommand{\bes}{\begin{equation*}}
\newcommand{\ees}{\end{equation*}}

\def\hil{\mathcal{H}}
\def\kil{\mathcal{K}}
\def\bC{\mathbb{C}}
\def\bz{\left(}
\def\jz{\right)}

\def\bR{\mathbb{R}}
\def\X{\mathcal{X}}

\newcommand{\ds}{\mbox{ }\mbox{ }}
\newcommand{\norm}[1]{\left\| #1\right\|}

\newcommand{\ket}[1]{|#1\rangle}
\nc{\ketbra}[2]{|#1\rangle\!\langle#2|}
\nc{\braket}[2]{\langle#1|#2\rangle}
\nc{\proj}[1]{| #1\rangle\!\langle #1 |}
\newcommand{\diad}[2]{|#1\rangle\langle #2|}
\newcommand{\pr}[1]{\diad{#1}{#1}}

\newcommand{\powerset}{\mathcal{P}}
\newcommand{\powersetnonempty}{\mathcal{P}_{\emptyset}}

\DeclareMathOperator{\Tr}{Tr}
\DeclareMathOperator{\tr}{Tr}

\DeclareMathOperator{\rank}{rank}

\begin{document}

\title{The structure of R\'{e}nyi entropic inequalities}

\author{Noah Linden}
\email{n.linden@bristol.ac.uk}
\affiliation{Department of Mathematics, University of Bristol, Bristol BS8 1TW, U.K.}

\author{Mil\'an Mosonyi}
\email{milan.mosonyi@gmail.com}
\affiliation{Department of Mathematics, University of Bristol, Bristol BS8 1TW, U.K.}
\affiliation{Mathematical Institute, Budapest University of Technology and Economics,
Budapest, Egry J. u.~1, 1111 Hungary}

\author{Andreas Winter}
\email{der.winter@gmail.com}
\affiliation{ICREA \&{}
             F\'{\i}sica Te\`{o}rica: Informaci\'{o} i Fenomens Qu\`{a}ntics,
             Universitat Aut\`{o}noma de Barcelona, ES-08193 Bellaterra (Barcelona), Spain}
\affiliation{Department of Mathematics, University of Bristol, Bristol BS8 1TW, U.K.}
\affiliation{Centre for Quantum Technologies, National University of Singapore, 3 Science Drive 2, Singapore 117543}

\begin{abstract}

 We investigate the universal inequalities relating the $\alpha$-R\'enyi entropies
  of the marginals of a multi-partite quantum state. This is in analogy
  to the same question for the Shannon and von Neumann entropy ($\alpha=1$) which are known to 
 satisfy several non-trivial inequalities such as strong subadditivity.
Somewhat surprisingly, we find for $0<\alpha<1$, that the only inequality is
  non-negativity:
  In other words, any collection of non-negative numbers assigned to the
  nonempty subsets of $n$ parties can be arbitrarily well approximated by the
  $\alpha$-entropies of the $2^n-1$ marginals of a quantum state.

  For $\alpha>1$ we show analogously that there are no non-trivial
  \emph{homogeneous} (in particular no linear) inequalities.
On the other hand, it is known that there are further,
 non-linear and indeed non-homogeneous, inequalities delimiting the
 $\alpha$-entropies of a general quantum state.

  Finally, we also treat the case of R\'enyi entropies restricted to
  classical states (i.e.~probability distributions), 
which in addition to non-negativity are also subject to monotonicity.
For $\alpha \neq 0,1$
  we show that this is the only other homogeneous relation.
\end{abstract}

\date{30 November 2012}

\maketitle

\section{Prologue}
\label{sec:intro}
The von Neumann entropy $S(\rho)=-\tr \rho\log\rho$ of a quantum state $\rho$ is a key 
notion in quantum information theory \cite{q-Shannon} 
as well as in statistical physics \cite{vonNeumann}. It is furthermore 
the canonical measure of entanglement for bipartite pure states \cite{entanglement}. 
In many cases the relative 
magnitude of the entropy of the reduced states of different subsystems is important.
Thus for example for a tripartite state $\rho_{ABC}$ one can compute $S(\rho_{A})$, 
$S(\rho_{B})$, and so on. For any positive number $a$ one can find a quantum state such that 
$S(\rho_{A})=a$, for example. However for a fixed quantum state, there are inequalities 
between the values of the entropies of the reduced states of the subsystems.

There are essentially two such unconstrained inequalities known (up to permutation of the parties),
\emph{strong subadditivity} and \emph{weak monotonicity} \cite{LiebRuskai:SSA,Pippenger:q}:
\begin{equation}\begin{split}
  \label{eq:SSA}
  S(\rho_{ABC}) + S(\rho_B) &\leq S(\rho_{AB}) + S(\rho_{BC}), \\
  S(\rho_{A}) + S(\rho_B)   &\leq S(\rho_{AC}) + S(\rho_{BC}).
\end{split}\end{equation}

There are no other constraints for up to $3$ parties, but the analogous
statement is a major open problem for larger 
$n$~\cite{Pippenger:q,LW:new,CadneyLindenWinter}.   Indeed we anticipate that the question 
may be very complicated in general.  For example the analogous question for classical 
(Shannon) entropies has been much studied and in this case, for $n\ge 4$ there are 
infinitely many independent (linear) inequalities known~\cite{ZY-1,ZY-2,Matus}. All these 
inequalities, and
more discovered by Makarychev \emph{et al.}~\cite{Makarychev}
and Dougherty \emph{et al.}~\cite{DFZ} might very well hold for the
von Neumann entropy, too.
 Indeed it is known that the von Neumann entropy satisfies some constrained inequalities that are counterparts of known classical constrained inequalties \cite{CadneyLindenWinter}.

For a state of $n$ parties there are $2^n-1$ non-trivial von Neumann entropies,
one corresponding to each non-empty subset of parties.
Thus the existence of these inequalities means that given a set of $2^n-1$ positive numbers 
there will, in general, be no quantum state whose reduced state entropies have these values.

In this paper we consider the analogous questions for quantum R\'enyi entropies, also called 
$\alpha$-entropies~\cite{Renyi:entropy},
of a quantum state $\rho$ (given as a unit trace
density operator on a suitable Hilbert space):
\begin{equation}
  S_\alpha(\rho) = \frac{1}{1-\alpha} \log\tr\rho^\alpha,\label{eq:renyi}
\end{equation}
for $0 \leq \alpha \leq \infty$. 
(We note that the case of $\alpha=1$ is the von Neumann entropy).

We will show that, very surprisingly, the case of R\'enyi entropies for $\alpha\neq 1$ is 
much different from that for the von Neumann entropy:

For $0 < \alpha < 1$, we will show 
(in theorem~\ref{thm:0-alpha-1})
that the only inequality is non-negativity 
$S_\alpha(\rho)\geq 0$. In other words, any collection of non-negative numbers assigned to 
the nonempty subsets of $n$ parties can be arbitrarily well approximated by the
$\alpha$-entropies of the $2^n-1$ marginals of a quantum state.
  
For $\alpha > 1$ we will prove 
that there are no linear (or indeed homogeneous) inequalities. 
We show (in theorem~\ref{thm:1-alpha-infty})
that given any vector ${\bf v}$ of $2^n-1$ positive numbers, it may or may not 
be the case that this is the vector of the $2^n-1$ marginals of a quantum state; however 
it is arbitrarily well approximated by a positive multiple of
the $\alpha$-entropies of the $2^n-1$ marginals of a quantum state.
On the other hand (contrary to the case $0<\alpha<1$) there are other, 
nonlinear, inequalities delimiting the set of possible entropy vectors; one such inequality 
was proved in \cite{Audenaert:inequality}, which we recall in 
section~\ref{sec:1-alpha-infty}.

Finally, we show (in section~\ref{sec:classical}) that in the classical case
the only homogeneous inequalities
are non-negativity and monotonicity (under the inclusion of subsets of parties), for all $\alpha \neq 0,1$.

\section{The R\'enyi entropy}
\label{sec:renyi}
The definition in (\ref{eq:renyi}) is
clearly well-defined, and continuous in the state as well as in
$\alpha$, for $\alpha \in (0,\infty)\setminus\{1\}$.
For $\alpha = 0,1,\infty$, the function is defined by taking a
limit, yielding
\begin{align*}
  S_0(\rho)      &= \log \rank \rho, \\
  S_1(\rho)      &= S(\rho) := -\tr\rho\log\rho \ \text{ (von Neumann entropy)}, \\
  S_\infty(\rho) &= -\log \| \rho \|,
\end{align*}
where $\|\cdot\|$ denotes the operator norm, i.e.~$\|\rho\|$ is the
largest eigenvalue of $\rho$.

By their definition, all of the quantum R\'enyi entropies depend only on the
spectrum of $\rho$, which we can think of as a probability distribution $P$.
In this sense, 
the above formulas generalise the notion
\[
  H_\alpha(P) = \frac{1}{1-\alpha} \log \left( \sum_x P(x)^\alpha \right)
\]
introduced by R\'enyi in his axiomatic investigation
of information measures for random variables and their
distributions~\cite{Renyi:entropy}, following Shannon's example
\cite{Shannon}. This approach has generated a lot of subsequent activity
\cite{AczelDaroczy}.

It is easy to see that for states $\rho \geq 0$, $\tr\rho = 1$,
all $S_\alpha(\rho) \geq 0$, with equality if and only if $\rho$
is pure, i.e.~a rank-one projector $\rho = \proj{\varphi}$.
Furthermore, for fixed $\rho$, the function $\alpha \mapsto S_\alpha(\rho)$
is monotonically non-increasing~\cite{Renyi:entropy}.

Many other useful, interesting and curious mathematical properties of
the R\'enyi entropies are known~\cite{AczelDaroczy}. 

R\'enyi entropies, and, more generally, R\'enyi relative entropies and the corresponding channel capacities play an important role in classical as well as quantum information theory.
The R\'enyi quantities with parameter $\alpha\in(0,1)$ are related to the so-called direct domain of information theoretic problems.
They can be used to quantify the trade-off between the rates of the two types of error probabilities in binary state discrimination \cite{Csiszar,Hayashi,Nagaoka,HMO,ANSzV}, which in turn yields a trade-off relation between the error rate and the compression rate in state compression
(see \cite{Csiszar} for the classical case; the quantum case is completely analogous). The related capacities
quantify the trade-off between the error rate and the coding rate for classical information
transmission \cite{Csiszar,Hayashi}, and can be used to obtain
lower bounds on the single-shot classical capacities \cite{MD}.
The R\'enyi quantities with parameter $\alpha>1$ are related to converse problems. They can be used to quantify the trade-off between the rates of the type I error and the type II success probability in binary state discrimination \cite{Csiszar,ON,Hayashibook}, as well as the trade-off between the
rate of the success probability and the compression rate in state compression \cite{Csiszar}. The related capacity formulas give bounds on the success rate for coding rates above the Holevo capacity
\cite{Csiszar,ON,KW}, and can be used to give upper bounds on the single-shot classical capacities of quantum channels \cite{MH}.
Also, R\'{e}nyi entropies feature prominently in the theory of
bipartite pure state transformations by local operations and classical
communication: Only recently it was shown~\cite{AubrunNechita} that
the monotonicity of the R\'enyi entropies of the reduced states for
$\alpha > 1$ is both necessary and sufficient for catalytic transformations
(whereas unassisted transformations are long known to be characterized by
majorization~\cite{Hardy99,Nielsen99}). And in~\cite{HaydenWinter:c-cost},
R\'{e}nyi entropies (essentially $\alpha=0$ and $\alpha=\infty$) were used
to put bounds on the classical communication required for a given
transformation.
And finally, R\'{e}nyi entropies were employed to put lower bounds on
the communication complexity of certain distributed computation problems
\cite{vanDamHayden,MontanaroWinter}.

While the von Neumann entropy can be obtained as the limit of the R\'enyi entropies for 
$\alpha\to 1$, and hence it can be considered as one particular member of this parametric family of entropies, its basic properties sharply distinguish it from all other members of the family. 
Indeed, while the von Neumann entropy is strongly subadditive, the other R\'enyi entropies with 
$\alpha\in (0,+\infty)\setminus\{1\}$ are not even subadditive. 

To illustrate the consequences of this difference, we mention 
the problem of entropy asymptotics on spin chains.
Given a 
translation-invariant state $\rho$ on an infinite spin chain, subadditivity of entropy 
ensures the existence of the limit 
$s(\rho):=\lim_{n\to+\infty}\frac{1}{n}S(\rho_{[1,n]})$, where $\rho_{[1,n]}$ is the 
restriction of $\rho$ to any $n$ consecutive sites, and $s(\rho)$ gives the ultimate 
compression rate for an ergodic $\rho$ \cite{BSz}. More refined knowledge about the decay of 
error for rates below $s(\rho)$ can be obtained using the method developed in \cite{HMO};
for this, however, one has to show the existence of the regularized R\'enyi entropies
$s_\alpha(\rho):=\lim_{n\to+\infty}\frac{1}{n}S_\alpha(\rho_{[1,n]})$ for every 
$\alpha\in(0,1)$. Due to the lack of subadditivity, the existence of this limit is 
not at all straightforward, and is actually only known for some special classes of states
\cite{HMO,MHOF,M}.

When $\rho$ is pure, the block entropies $S_\alpha(\rho_{[1,n]})$ are used to 
measure the entanglement between the block $[1,n]$ and the rest of the chain,
and the scaling of these entropies are closely related to the presence or absence of 
criticality in the system \cite{VLRK,ECP}.
It follows from strong subadditivity that the entanglement entropy
$S(\rho_{[1,n]})$ is a monotone increasing function of the block size
\cite{AF}. This is no longer true when the entanglement is measured by some R\'enyi entropy; a 
counterexample with oscillating block R\'enyi entropies for $\alpha>2$ was found in \cite{Illuminati}. It 
is not known, however, whether such oscillating behaviour can happen for R\'enyi entopies 
with parameter $\alpha$ arbitrarily close to $1$. 

In the view of the above examples, it is natural to ask whether there are other 
universal inequalities between the R\'enyi entropies of the subsytems of a multipartite 
quantum system, and this is what we are going to investigate in the following.

To fix notation, we shall concern ourselves with $n$-partite quantum systems
with generic tensor product Hilbert space $\cH = \cH_1 \otimes \cdots \otimes \cH_n$.
Within the discussion we usually consider $n$ and $\alpha$ to be fixed,
but the local systems are unconstrained,
i.e.~we do not impose limits on the dimension of the $\cH_i$.
For a state $\rho$ on $\cH$ we have the reduced states
$\rho_I = \tr_{I^c} \rho$, with the partial trace over all parties
in the complement $I^c = [n]\setminus I$, and we shall consider them
and their entropies all at once, for all non-empty subsets
of $[n]=\{1,\ldots,n\}$.
The power set and the power set without the empty set we denote as follows:
\begin{align*}
  \powerset[n]         &:= \{ I \subset [n] \}, \\
  \powersetnonempty[n] &:= \{ I \subset [n] : I \neq \emptyset \}
                         = \powerset[n] \setminus \{\emptyset\}.
\end{align*}

We are interested in the universal relations obeyed by the
$\alpha$-entropies of a general $n$-party state $\rho$. For instance,
by definition clearly 
\begin{equation*}
S_\alpha(\rho_I) \geq 0
\end{equation*}
for all subsets $I\subset [n]$.

Note that via the usual diagonal matrix representation we can view a
probability distribution of $n$ discrete random variables as a quantum state, and
conversely, states which are diagonal in a tensor product basis of an $n$-party system
can be identified with a classical $n$-party probability distribution, and hence 
we will call states of that form \emph{classical}.
In this case, there is another inequality,
\begin{equation*}
S_\alpha(\rho_I) \leq S_\alpha(\rho_J)
\end{equation*} 
for $I\subset J$, i.e.~monotonicity
of the entropy function with respect to subset inclusion.

These examples motivate the introduction of the set of all entropic vectors,
\[
  \Sigma_\alpha^n := \left\{ \bigl( S_\a(\rho_I) \bigr)_{I\in\powersetnonempty[n]}
                                                          \,:\, \rho \text{ state} \right\}
                  \subset \RR^{\powersetnonempty[n]},
\]
and the same for classical case
\[\begin{split}
  \Gamma_\alpha^n &:= \left\{ \bigl( S_\a(\rho_I) \bigr)_{I\in\powersetnonempty[n]}
                                                 \,:\, \rho \text{ classical state} \right\} \\
                  &=  \left\{ \bigl( H_\a(P_I) \bigr)_{I\in\powersetnonempty[n]}
                                                       \,:\, P \text{ prob. distr.} \right\}
                   \subset \RR^{\powersetnonempty[n]}.
\end{split}\]
In fact, as we tend to consider ``$\leq$''type inequalities
between continuous functions of the coordinates, it makes sense
to focus on the topological closures
$\overline{\Sigma_\alpha^n},\ \overline{\Gamma_\alpha^n} \subset \RR^{\powersetnonempty[n]}$.
Those universal inequalities we are looking for are the constraints
describing the geometric shape of these sets.

The above examples of known inequalities are homogeneous,
indeed linear, relations. (By a homogeneous inequality we mean an inequality of the 
form $f({\bf v})\ge 0,\,{\bf v}\in \Sigma_\alpha^n$, where $f$ is a homogeneous function
on $\RR^{\powersetnonempty[n]}_{\ge 0}$, i.e., there exists a $d\in\bR$ such that  
$f(\lambda{\bf v})=\lambda^d f({\bf v})$ holds for every $\lambda\in\bR_{\ge 0}$ and 
${\bf v}\in \RR^{\powersetnonempty[n]}_{\ge 0}$.) That it is meaningful to look for such
relations is motivated by the observation that all R\'enyi entropies
are \emph{extensive}, i.e.
\[
  S_\alpha(\rho\otimes\sigma) = S_\alpha(\rho) + S_\alpha(\sigma).
\]
And since this is true for all subset reduced states simultaneously,
we have for non-negative integers $k$ and $\ell$,
\[
  k\Sigma_\alpha^n + \ell\Sigma_\alpha^n \subset \Sigma_\alpha^n,
    \quad
  k\Gamma_\alpha^n + \ell\Gamma_\alpha^n \subset \Gamma_\alpha^n,
\]
and likewise for the respective closures. If this held for non-negative
\emph{reals} it would mean that the corresponding set is a convex
cone. This is indeed known for $\alpha=1$~\cite{Yeung:framework,Pippenger:q},
but not true for $\alpha > 1$ (see below).

\section{$\mathbf{0 < \alpha < 1}$}
\label{sec:0-alpha-1}
In this section, $\alpha$ is fixed in the interval $(0,1)$. We start off
with a simple classical construction.
For $I\in\powersetnonempty[n]$, let $\delta_I$ denote the corresponding basis vector in 
$\RR^{\powersetnonempty[n]}$, i.e., $\delta_I$ is the characteristic function of the singleton $\{I\}$.

\begin{lemma}
  \label{lemma:0-alpha-1}
  For any $s > 0$, the vector $s\delta_{[n]} \in \RR_{\geq 0}^{\powersetnonempty[n]}$
  is approximately $\alpha$-entropic, i.e.~$s\delta_{[n]} \in \overline{\Sigma_\alpha^n}$.
  In fact, this vector can be approximated arbitrarily by classical states.
\end{lemma}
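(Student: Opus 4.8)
The plan is to exhibit an explicit one-parameter family of classical (diagonal) states whose full-set $\alpha$-entropy converges to $s$ while the entropy of every proper marginal tends to $0$. Concretely, fix a large integer $N$ and work on the product alphabet $\{0,1,\ldots,N\}$ for each of the $n$ parties. I would take the probability distribution $P=P_N$ that places mass $1-\epsilon$ on the all-zeros string $(0,\ldots,0)$ and spreads the remaining mass $\epsilon$ uniformly over the $N^n$ strings all of whose coordinates lie in $\{1,\ldots,N\}$; the associated density operator $\rho_N$ is the diagonal state with these entries, hence classical. The point of keeping the ``bulk'' support disjoint from every coordinate hyperplane $\{x_i=0\}$ is that marginalising onto a subset $I$ with $|I|=k$ cleanly yields $P_I=(1-\epsilon)\delta_{\vec 0_I}+(\epsilon/N^{k})\,\mathrm{Unif}(\{1,\ldots,N\}^{k})$, since no bulk string can project onto a string containing a zero.

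With this structure the computation is immediate: for every $I$ with $|I|=k$ one gets $\sum_x P_I(x)^\alpha=(1-\epsilon)^\alpha+\epsilon^\alpha N^{k(1-\alpha)}$, so that $S_\alpha(\rho_I)=\frac{1}{1-\alpha}\log\!\big[(1-\epsilon)^\alpha+\epsilon^\alpha N^{k(1-\alpha)}\big]$. The whole argument then rests on the single scaling choice $\epsilon=\epsilon_N:=\big[(e^{(1-\alpha)s}-1)\,N^{-n(1-\alpha)}\big]^{1/\alpha}$, which is legitimate because $s>0$ and $0<\alpha<1$ force $e^{(1-\alpha)s}>1$, and which makes $\epsilon_N\to 0$ as $N\to\infty$. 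With this choice the ``spread'' term for the full set, $\epsilon_N^\alpha N^{n(1-\alpha)}$, is pinned exactly at the constant $e^{(1-\alpha)s}-1$, whereas for a proper subset $k\le n-1$ it equals $(e^{(1-\alpha)s}-1)\,N^{(k-n)(1-\alpha)}$.

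Letting $N\to\infty$ then finishes the proof: since $\epsilon_N\to0$ we have $(1-\epsilon_N)^\alpha\to1$, so $S_\alpha(\rho_{N,[n]})\to\frac{1}{1-\alpha}\log\big(1+(e^{(1-\alpha)s}-1)\big)=s$, while for each proper nonempty $I$ the exponent $(k-n)(1-\alpha)$ is strictly negative, giving $N^{(k-n)(1-\alpha)}\to0$ and hence $S_\alpha(\rho_{N,I})\to\frac{1}{1-\alpha}\log 1=0$. Thus the entropy vectors of the $\rho_N$ converge to $s\delta_{[n]}$, which therefore lies in $\overline{\Sigma_\alpha^n}$ and is approximable by classical states. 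The conceptual heart of the argument---and the only place where $\alpha<1$ is essential---is the observation that for $\alpha<1$ the $\alpha$-entropy is acutely sensitive to how widely even a tiny probability mass is spread, so that the growth rate $N^{k(1-\alpha)}$ in $k$ lets a single extra tensor factor separate the scale of the full-set entropy from that of its marginals; the main thing to get right is the coupling of $\epsilon$ to $N$ so that the full-set ``spread'' stays order one while all proper-subset spreads vanish.
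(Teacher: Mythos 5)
Your proof is correct and follows essentially the same route as the paper: the paper uses exactly this construction (mass $1-t$ on the all-zeros string, the rest spread uniformly over the all-nonzero strings, with alphabet sizes $M_i$ in place of your uniform $N$) and the same coupling $t^\alpha M_{[n]}^{1-\alpha}=2^{s(1-\alpha)}-1$, which matches your choice of $\epsilon_N$ up to the base of the logarithm. The only cosmetic difference is that the paper allows distinct local alphabet sizes $M_1,\ldots,M_n$ and sends $\min_i M_i\to\infty$, which changes nothing of substance.
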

\begin{proof}
  For integers $M_1,\ldots,M_n$ consider ``local'' alphabets
  $\cX_i := \{ 0 \} \cup [M_i]$ and define distributions $P_{t;\{M_i\}}$
  ($0\leq t\leq 1$)
  on the Cartesian product $\cX_1\times\cdots\times\cX_n$ as follows:
  \[
    P_{t;\{M_i\}}(x_1,\ldots,x_n) := \begin{cases}
                                       1-t                     & \text{ if } x_1=\ldots=x_n=0,      \\
                                       \frac{t}{M_1\cdots M_n} & \text{ if } x_1,\ldots,x_n \neq 0, \\
                                       0                       & \text{ otherwise.}
                                     \end{cases}
  \]
  The marginals on $\cX_I = \varprod_{i\in I} \cX_i$, for a subset $I\subset [n]$,
  are easy to construct: they are given precisely by
  \(
    P_{t;\{M_i:i\in I\}}(x_i:i\in I)
  \).
  The corresponding quantum state and its marginals hence are
  \begin{align*}
    \rho   &= \sum_{x_1,\ldots,x_n} P_{t;\{M_i\}}(x_1,\ldots,x_n) \proj{x_1} \ox \cdots \ox \proj{x_n}, \\
    \rho_I &= \sum_{x_i:i\in I} P_{t;\{M_i:i\in I\}}(x_i:i\in I) \bigotimes_{i\in I} \proj{x_i}.
  \end{align*}
  With this, the R\'enyi entropies are straightforward to compute:
  \[
    S_\alpha(\rho_I) = \frac{1}{1-\alpha}\log\left( (1-t)^\alpha + t^\alpha M_I^{1-\alpha} \right),
  \]
  with $M_I = \prod_{i\in I} M_i$.

  Now, for sufficiently large $M_{[n]}$, we can set
  \begin{equation}
    \label{eq:t-value}
    t := \left( \frac{2^{s(1-\alpha)}-1}{M_{[n]}^{1-\alpha}} \right)^{\frac{1}{\alpha}}.
  \end{equation}
  Then, in the limit $\min\{M_1,\ldots,M_n\} \rightarrow \infty$,
  \[
    S_\alpha(\rho_{[n]}) = \frac{1}{1-\alpha}\log\left( (1-t)^\alpha + 2^{s(1-\alpha)} - 1 \right)
                           \rightarrow s,
  \]
  since $t \rightarrow 0$.
  On the other hand, for $I \subsetneq [n]$,
  \[
    S_\alpha(\rho_I) = \frac{1}{1-\alpha}
                       \log\left( (1-t)^\alpha + \frac{2^{s(1-\alpha)}-1}{M_{[n]\setminus I}^{1-\alpha}} \right)
                       \rightarrow 0,
  \]
  because $M_i \rightarrow \infty$.
\end{proof}

\begin{proposition}
  \label{prop:0-alpha-1}
  For any $\emptyset\neq I\subset[n]$, and $s > 0$, the
  vector $s\delta_I \in \RR_{\geq 0}^{\powersetnonempty[n]}$
  is approximately $\alpha$-entropic, i.e.~$s\delta_I \in \overline{\Sigma_\alpha^n}$.
\end{proposition}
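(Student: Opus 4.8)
The plan is to reduce the general case to the one already settled in Lemma~\ref{lemma:0-alpha-1} by exhibiting a state whose $I$-marginal carries entropy $s$ while every other marginal is driven to zero. A purely classical construction cannot work, because for classical states one has monotonicity $S_\alpha(\rho_J)\geq S_\alpha(\rho_I)$ whenever $J\supseteq I$, so the supersets of $I$ can never be made small. The remedy is to put genuine entanglement between $I$ and $I^c$, which for $\alpha<1$ can make a \emph{larger} marginal \emph{less} mixed than a smaller one. Concretely, I would use the gadget
\[
  \tau \;=\; \sigma^{I}\,\ox\,\bigotimes_{k\in I^c}\proj{\Phi_k},
\]
where $\sigma^{I}$ is a collective classical spreader as in Lemma~\ref{lemma:0-alpha-1}, living on the registers of the parties in $I$, and each $\ket{\Phi_k}=\frac{1}{\sqrt L}\sum_{j=1}^{L}\ket{j}\ket{j}$ is a maximally entangled pair of Schmidt rank $L$ joining one fixed party $i_0\in I$ to the party $k\in I^c$. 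The state actually used is the dilution $\rho=(1-\delta)\proj{0\cdots 0}+\delta\,\tau$, and the whole point is to send $\delta\to 0$ together with the dimensions so that, marginal by marginal, only $\rho_I$ survives with $S_\alpha(\rho_I)\to s$.

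Since the three ingredients are in mutual tensor product and $\sigma^I$ is diagonal, every $\tau_J$ factorizes and $\tr\tau_J^\alpha$ is the product of the contribution of $\sigma^I_{J\cap I}$ and of the $p=|I^c|$ pairs, which act in complementary ways. A pair $\proj{\Phi_k}$ contributes the large factor $L^{1-\alpha}$ to precisely those $J$ that contain its $I$-end $i_0$ but not its $I^c$-end $k$, and contributes $1$ as soon as $J$ contains both ends (it is then pure); the spreader $\sigma^I$ contributes its full spread only when $J\cap I=I$ and, by the same computation as in the lemma, tends to $1$ on every proper subset of $I$. Thus $\sigma^I$ kills all $J\subsetneq I$, the entanglement kills all supersets $J\supsetneq I$ (each extra party of $I^c$ collapses one pair), and after multiplying by $\delta^\alpha$ and tuning $\delta^\alpha\,\tr\tau_I^\alpha\to 2^{s(1-\alpha)}-1$ one gets $\delta^\alpha\tr\tau_J^\alpha\to 0$, i.e.\ $S_\alpha(\rho_J)\to 0$, for every $J\neq I$. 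This is exactly the discontinuity of $S_\alpha$ for $\alpha<1$ at work: a weight $\delta\to0$ smeared over a diverging dimension keeps a \emph{finite} entropy in one marginal yet contributes nothing to the others.

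The step I expect to be the crux is suppressing all $2^n-2$ competing marginals \emph{at once}, and in particular the two extremal families that pull against each other. The supersets $J\supsetneq I$ are what force a separate pair for \emph{each} $k\in I^c$: a single pair, or a collective purification, would leave $\rho_{I\cup\{k'\}}$ large for the parties $k'$ not holding the purifying system, and a full purification of $\sigma^I$ would even give $S_\alpha(\rho_{I^c})=s$. On the other side, the marginals that contain much of $I^c$ but only part of $I$—most severely $\rho_{I^c}$ itself—receive the full factor $L^{(1-\alpha)p}$ from the pairs and are hardest to kill; beating them is precisely why $\sigma^I$ must be tuned so that its full-$I$ spread $\Lambda=\tr(\sigma^I_I)^\alpha$ \emph{diverges} while each proper-subset spread still tends to $1$. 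This is a genuine but achievable balance: with all local alphabets of size $M\to\infty$ and $m=|I|$, taking $t^\alpha$ between $M^{-m(1-\alpha)}$ and $M^{-(m-1)(1-\alpha)}$ makes $\Lambda\to\infty$ yet leaves every proper subset at $1$. The remaining work is bookkeeping: a short case analysis over the position of $J$ relative to $i_0$, to $I$, and to $I^c$, checking in each case that the relevant product of factors, weighted by $\delta^\alpha$, vanishes.
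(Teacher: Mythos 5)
Your construction is correct, but it takes a genuinely different route from the paper's, so a comparison is worthwhile. The paper adjoins an $(n+1)$-st purifying party and uses a single \emph{pure} state: its sum over matrices factorizes into $\ket{\psi}=\sqrt{1-t}\,\ket{0\cdots 0}+\sqrt{t}\,\bigotimes_{(i,j)\in I\times I^c}\ket{\Phi_{ij}}$, a coherent vacuum-plus-pairs superposition with maximally entangled pairs of dimension $M$ on the \emph{complete} bipartite graph between $I$ and $I^c=[n+1]\setminus I$. There is no separate spreader: the two sides of the pure-state cut are isospectral, so $\rho_I$ is itself a state of lemma~\ref{lemma:0-alpha-1} and $S_\alpha(\rho_I)\to s$ is inherited, while global purity gives $S_\alpha(\rho_J)=S_\alpha(\rho_{J^c})$, reducing every $J\neq I$ to the case that some pair lies entirely inside $J$; that single collapsed pair shrinks the support of the $t$-branch by a factor $M$, which with the paper's choice of $t$ already forces $S_\alpha(\rho_J)\to 0$. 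You instead stay on $n$ parties with an \emph{incoherent} mixture, a star of pairs joining $i_0\in I$ to each $k\in[n]\setminus I$, and an explicit classical spreader carrying the entropy; the cost is exactly the two-scale tuning you flag as the crux: $\Lambda=\tr(\sigma^I)^\alpha\to\infty$ while every proper-subset spread tends to $1$, together with $\delta^\alpha\Lambda L^{(1-\alpha)p}\to 2^{s(1-\alpha)}-1$ where $p=|[n]\setminus I|$. I checked that your case analysis closes: supersets of $I$ lose at least one factor $L^{1-\alpha}$, proper subsets containing $i_0$ lose the factor $\Lambda$, sets missing $i_0$ but meeting $[n]\setminus I$ (worst case $J=[n]\setminus I$, which collects the full $L^{(1-\alpha)p}$) are beaten precisely because $\Lambda\to\infty$, and the rest are trivial. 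What each approach buys: the paper's has a one-parameter tuning and almost no case analysis, but needs the auxiliary party and separate treatment of $I=[n]$ and $|I|=1$; yours needs no extra party and is uniform in $I$ (no special cases), at the price of more parameters ($t,\delta,M,L$), the full case analysis, and some support bookkeeping --- in particular you should place the vacuum atom orthogonal to the pair registers so that $\tr\rho_J^\alpha=(1-\delta)^\alpha+\delta^\alpha\tr\tau_J^\alpha$ splits cleanly. One slip of wording only: a pair contributes $L^{1-\alpha}$ also when $J$ contains its end $k\in[n]\setminus I$ but not $i_0$, not just the reverse; your later discussion of the $I^c$-heavy marginals shows you accounted for this, so correctness is unaffected.
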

\begin{proof}
  It is enough to show that for any $\epsilon > 0$, there exist local
  systems $\cH_1,\ldots,\cH_n,\cH_{n+1}$ and a pure state $\rho = \proj{\psi}$ on
  $\cH_1\ox\cdots\ox\cH_n\ox\cH_{n+1}$ with
  \[
    s - \epsilon \leq S_\alpha(\rho_I) \leq s + \epsilon, \quad \text{and} \quad
    S_{\alpha}(\rho_J)\leq \epsilon \ \text{ if } J\subset[n],\ J\neq I.
  \]

  If $I=[n]$, 
we just use the $(n+1)$st party to purify the state. If
  $|I| = 1$, we likewise take the classical state of lemma~\ref{lemma:0-alpha-1}
  on the $n$-party system $[n+1]\setminus I$ and purify it using the
  system $I$.
  Thus, from now on we may assume that $k=|I|$ and $\ell = |I^c| = n+1-k$
  are both $\geq 2$. The idea is that for integer $M$, the distributions
  $P_{t;\{M^\ell:i\in I\}}$ on the systems $I$,
  and $P_{t;\{M^k:j\in I^c\}}$
  on the systems $I^c = [n+1]\setminus I$, both with the same $t$ given
  by 
  \[
    t = \left( \frac{2^{s(1-\alpha)}-1}{M^{k\ell(1-\alpha)}} \right)^{\frac{1}{\alpha}},
  \]
  have the same nonzero probabilities, just
  arranged differently. In other words, the corresponding classical states
  are isospectral, hence we may view them as reduced states of an
  $(n+1)$-party pure state.

  In detail, we may without loss of generality relabel the systems such that
  $I = \{1,\ldots,k\}$ and $I^c = \{k+1,\ldots,k+\ell=n+1\}$.
For every $i\in\{1,\ldots,k\}$ and $j\in\{1,\ldots,\ell\}$, let $\kil_{ij}$ be an
$M$-dimensional Hilbert space with an orthonormal basis $\{\ket{k}_{ij}\,:\,k=1,\ldots,M\}$,
and define
\begin{equation*}
\hil_i:=\bC\ket{0}_i\oplus\bz\bigotimes_{j=1}^{\ell}\kil_{ij}\jz,\ds\ds\ds
\hil_{k+j}:=\bC\ket{0}_{k+j}\oplus\bz\bigotimes_{i=1}^k\kil_{ij}\jz,
\end{equation*}
where $\ket{0}_i$ are unit vectors.
For a $k\times \ell$ matrix $x\in[M]^{k\times \ell}$ and $i\in\{1,\ldots,k\},\,
j\in\{1,\ldots,\ell\}$, let
\begin{equation*}
\ket{x^{(i)}}:=\bigotimes_{j=1}^{\ell}\ket{x^{(i)}_j}_{ij}\in\hil_i,\ds\ds\ds
\ket{x_{(k+j)}}:=\bigotimes_{i=1}^k\ket{x^{(i)}_j}_{ij}\in\hil_{k+j},
\end{equation*}
and
\begin{equation*}
\ket{\psi}:=\sqrt{1-t}\bigotimes_{i=1}^{n+1}\ket{0}_i+
\sqrt{\frac{t}{M^{k\ell}}}\sum_{x\in[M]^{k\times \ell}}
\ket{x^{(1)}}\ldots\ket{x^{(k)}}\ket{x_{(k+1)}}\ldots\ket{x_{(n+1)}}.
\end{equation*}
(Here $\ket{x}\ket{y}$ stands for $\ket{x}\otimes\ket{y}$.)

Let $\rho := \rho_{[n]} = \tr_{n+1} \proj{\psi}$.
The crucial property of this definition is that every party $i\in I$
and $j\in I^c$ have a coordinate in common, namely $x^{(i)}_j \in [M]$.
One can easily see that $\rho_I$ is a classical state of the type studied in
lemma \ref{lemma:0-alpha-1}, and the same calculation as in lemma \ref{lemma:0-alpha-1}
shows that
\begin{equation*}
S_\alpha(\rho_I) =
\frac{1}{1-\alpha}\log\left( (1-t)^\alpha + 2^{s(1-\alpha)} - 1 \right) \rightarrow s,
\end{equation*}
as $M\to\infty$.
If $J\in\powersetnonempty[n]$ is different from $I$ then there are
  $i\in I$ and $j\in I^c = [n+1]\setminus I$, such that either $i,j\in J$
  or $i,j\in J^c$. The second case has entropy equivalent to the first
  since we may just go to the complementary set. In the first case,
we have
\begin{equation*}
\rho_J=(1-t)\bigotimes_{i\in J}\pr{0}_i+t\sigma,
\end{equation*}
where $\sigma$ is supported on a space which contains each $\kil_{ij}$
at most once
if the $i$th or the $(k+j)$th system has been traced out, and twice
otherwise. Hence,
\begin{equation*}
\sigma=\bz\bigotimes_{i\in J\cap I}\bigotimes_{j\in J\cap [n]\setminus I}\pr{\psi_{ij}}\jz\otimes\sigma',
\end{equation*}
where
\begin{equation*}
\ket{\psi_{ij}}=\frac{1}{\sqrt{M}}\sum_{k=1}^M\ket{k}_{ij}\ket{k}_{ij}\in\kil_{ij}\otimes\kil_{ij},
\end{equation*}
and $\sigma'$ is a density operator supported on a space of dimension at most
$M^{k\ell-1}$. Hence, $S_{\alpha}(\sigma)=S_{\alpha}(\sigma')\le\log M^{k\ell-1}$, or equivalently, $\Tr \sigma^{\alpha}\le M^{(k\ell-1)(1-\alpha)}$. This yields
\begin{equation*}\begin{split}
S_\alpha(\rho_J)
&\leq
\frac{1}{1-\alpha}\log\left( (1-t)^\alpha + M^{(k\ell-1)(1-\alpha)}t^\alpha \right) \\
&=
\frac{1}{1-\alpha}\log\left( (1-t)^\alpha + \frac{1}{M^{1-\alpha}}
\bigl(2^{s(1-\alpha)}-1 \bigr) \right)
\rightarrow 0,
  \end{split}
\end{equation*}
as $M\to\infty$.
\end{proof}

\begin{theorem}
  \label{thm:0-alpha-1}
  Every element $v \in \RR_{\geq 0}^{\powersetnonempty[n]}$ is approximately
  $\alpha$-entropic.
  In other words, there are no non-trivial
  inequalities constraining the R\'enyi entropies (with fixed $\alpha<1$)
  of a multi-party state: The only restriction is non-negativity:
  \[
    \overline{\Sigma_\alpha^n} = \RR_{\geq 0}^{\powersetnonempty[n]}.
  \]
\end{theorem}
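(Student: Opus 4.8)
The plan is to assemble an arbitrary non-negative vector out of the single-coordinate vectors $s\delta_I$ already handled in Proposition~\ref{prop:0-alpha-1}, using the extensivity (additivity) of the R\'enyi entropy as a gluing mechanism. The genuine work has been done; what remains is essentially a semigroup argument.

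First I would record the elementary but crucial closure property that $\overline{\Sigma_\alpha^n}$ is closed under addition. Given two $n$-party states $\rho$ and $\sigma$, regard $\rho\otimes\sigma$ as a state of $n$ parties in which the $i$th party holds \emph{both} tensor factors; then $(\rho\otimes\sigma)_I=\rho_I\otimes\sigma_I$, and extensivity gives $S_\alpha\bigl((\rho\otimes\sigma)_I\bigr)=S_\alpha(\rho_I)+S_\alpha(\sigma_I)$ for every $I\in\powersetnonempty[n]$. Hence the entropy vector of $\rho\otimes\sigma$ is the sum of the two individual entropy vectors, so $\Sigma_\alpha^n+\Sigma_\alpha^n\subset\Sigma_\alpha^n$. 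Passing to limits of approximating sequences — if $\rho^{(m)}$ and $\sigma^{(m)}$ realise entropy vectors converging to $a$ and $b$, then $\rho^{(m)}\otimes\sigma^{(m)}$ realises vectors converging to $a+b$ — yields the same statement for the closure, namely $\overline{\Sigma_\alpha^n}+\overline{\Sigma_\alpha^n}\subset\overline{\Sigma_\alpha^n}$.

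Next I would decompose the target. Write $v=\sum_{I\in\powersetnonempty[n]} v_I\,\delta_I$ with every coordinate $v_I\geq 0$. For each $I$ with $v_I>0$, Proposition~\ref{prop:0-alpha-1} supplies $v_I\delta_I\in\overline{\Sigma_\alpha^n}$; when $v_I=0$ the corresponding term is the zero vector, which already lies in $\Sigma_\alpha^n$ (realised by any pure product state). Applying the additive closure of the previous paragraph to this finite sum — at most $2^n-1$ terms — shows $v\in\overline{\Sigma_\alpha^n}$. Concretely, to get $v$ within $\epsilon$ one approximates each $v_I\delta_I$ to within $\epsilon/(2^n-1)$ by some state and tensors all these states together; their entropy vectors add up exactly, so the total is within $\epsilon$ of $v$.

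This establishes $\RR_{\geq 0}^{\powersetnonempty[n]}\subset\overline{\Sigma_\alpha^n}$; since non-negativity of the R\'enyi entropy gives the reverse inclusion $\overline{\Sigma_\alpha^n}\subset\RR_{\geq 0}^{\powersetnonempty[n]}$, we obtain the claimed equality $\overline{\Sigma_\alpha^n}=\RR_{\geq 0}^{\powersetnonempty[n]}$. I do not expect any serious obstacle at this stage: the only points requiring care are the additivity/tensor-product observation and the routine check that a tensor product of approximating states still approximates the coordinate-wise sum, both of which are straightforward. All the real difficulty was absorbed into the construction of the pure state in Proposition~\ref{prop:0-alpha-1}.
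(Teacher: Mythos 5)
Your proof is correct and follows essentially the same route as the paper: decompose $v=\sum_I v_I\delta_I$, invoke Proposition~\ref{prop:0-alpha-1} for each coordinate vector, and tensor the approximating states together, letting extensivity of $S_\alpha$ add the entropy vectors. The paper states this more tersely, but your explicit additivity-of-$\overline{\Sigma_\alpha^n}$ observation and the $\epsilon/(2^n-1)$ bookkeeping are exactly what its ``we are done'' implicitly relies on.
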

\begin{proof}
  Via proposition~\ref{prop:0-alpha-1} this is quite obvious:
  Observe $v = \sum_{I\in\powersetnonempty[n]} v_I \delta_I$, and that
  for each subset $I\subset [n]$ we can find an $n$-party state $\rho^{(I)}$
  such that its entropies arbitrarily approximate $v_I\delta_I$, i.e.
  \[
    \bigl| S_\alpha(\rho^{(I)}_J) - v_I\delta_{I,J} \bigr|
                           \leq \epsilon \ \text{ for all } I,J \subset [n].
  \]
  Letting $\rho := \bigotimes_{I\in\powersetnonempty[n]} \rho^{(I)}$,
  we are done, since $\epsilon$ can be chosen arbitrarily small.
\end{proof}

\begin{remark}
  \label{rem:0-1-closure}
  From theorem~\ref{thm:0-alpha-1} we can see that $\Sigma_\alpha^n$ is not a closed set
  (assuming $n\geq 2$).
  \normalfont
  Indeed, we found that for any $I \subset [n]$, the ray $\RR_{\geq 0}\delta_I$
  is in $\overline{\Sigma_\alpha^n}$. However, it is easy to see that except
  for the origin, none of its points $s\delta_I$ can be an element of $\Sigma_\alpha^n$.

  For otherwise there would be a state $\rho$ with $S_\alpha(\rho_I) = s$ and all
  other $S_\alpha(\rho_J)=0$. Now, if $|I| \geq 2$, say $I=\{i_1,i_2,\ldots\}$,
  then $S_\alpha(\rho_{i})=0$ implies that all single-party marginals
  $\rho_{i}$ are pure, meaning that $\rho = \rho_1 \ox \cdots \ox \rho_n$
  is pure, too. Hence we would have $S(\rho_I)=0$ as well.
  If on the other hand $I=\{i\}$, then we may choose $j\not\in I$ and
  reason similarly that $\rho_j$ is
  a pure state, hence $\rho_{\{i,j\}} = \rho_i \ox \rho_j$ and so
  $S(\rho_{\{i,j\}}) = S(\rho_i) + S(\rho_j) = s \neq 0$, obtaining
  a contradiction again.
\end{remark}

\section{$\mathbf{1 < \alpha \leq \infty}$}
\label{sec:1-alpha-infty}
As in the previous section, we start with the basic construction
to attain entropy vectors arbitrarily close to the coordinate axes.
Throughout the section, $1 < \alpha \leq \infty$ is fixed.

\begin{lemma}
  \label{lemma:1-alpha-infty}
  For all $s > 0$, there is a vector
  $s\delta_{[n]} + O(1) \in \Sigma_\alpha^n$.
  To be precise, there exists a constant $C$, which may be chosen
  as $C=\frac{1}{1-\frac{1}{\alpha}}\log n$, and classical states with
  \[
    s \leq S_\a(\rho) \leq s + C, \quad \text{and} \quad
    S_a(\rho_J) \leq C \ \text{ for } J\neq [n].
  \]
  In particular, $\delta_{[n]} \in \overline{\RR_{\geq 0}\,\Sigma_\alpha^n}$.
\end{lemma}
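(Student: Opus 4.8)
The plan is to adapt the mixture construction of Lemma~\ref{lemma:0-alpha-1}, but to place the ``spread-out'' mass on strings in which only a \emph{single} party is non-zero. The mechanism is that for $\alpha>1$ the R\'enyi entropy is governed by the heaviest atoms and is \emph{not} subadditive: I can arrange that the full distribution has only tiny atoms (hence large $S_\alpha$), while every proper marginal collects a single heavy atom (hence bounded $S_\alpha$). Concretely, over local alphabets $\cX_i=\{0\}\cup[M]$ I would take the ``star'' distribution that puts weight $1-t$ on the all-zero string and weight $\frac{t}{nM}$ on each of the $nM$ strings having exactly one non-zero coordinate $i\in[n]$ carrying a value in $[M]$.

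The reduction of the corresponding classical state $\rho$ to any $J\subset[n]$ is again a star distribution and is immediate to read off: the all-zero string of $J$ receives weight $1-\frac{t|J|}{n}$ (collecting the all-zero string together with the strings whose unique non-zero coordinate lies in $J^c$), while each of the $|J|M$ strings with one non-zero coordinate in $J$ receives weight $\frac{t}{nM}$. Hence
\[
  \Tr\rho_J^\alpha = \Bigl(1-\tfrac{t|J|}{n}\Bigr)^\alpha + |J|\,M^{1-\alpha}\Bigl(\tfrac{t}{n}\Bigr)^\alpha ,
\]
and $S_\alpha(\rho_J)=\frac{1}{1-\alpha}\log\Tr\rho_J^\alpha$.

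For the parameters I would write $\alpha'=\frac{\alpha}{\alpha-1}$, so that $C=\frac{1}{1-1/\alpha}\log n=\alpha'\log n$, and use that $M^{1-\alpha}\to0$ as $M\to\infty$ since $\alpha>1$. For $J=[n]$ this squeezes $\Tr\rho^\alpha$ between $(1-t)^\alpha$ and $(1-t)^\alpha+(nM)^{1-\alpha}$; fixing $1-t$ of order $2^{-s/\alpha'}$ and then $M$ large enough to push the correction term below $2^{-(\alpha-1)s}$ forces $s\leq S_\alpha(\rho)\leq s+C$. For a proper subset $J\subsetneq[n]$ we have $|J|\leq n-1$ and $t\leq1$, so the heavy atom has weight $1-\frac{t|J|}{n}\geq\frac1n$; dropping the non-negative second term gives $\Tr\rho_J^\alpha\geq n^{-\alpha}=2^{(1-\alpha)C}$, i.e.\ $S_\alpha(\rho_J)\leq C$ uniformly in $J$. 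The case $\alpha=\infty$ is the same computation with $\Tr\rho_J^\alpha$ replaced by $\|\rho_J\|=\max\!\bigl(1-\tfrac{t|J|}{n},\,\tfrac{t}{nM}\bigr)$, and is in fact exact once $M\geq\frac{t}{n(1-t)}$. The ``in particular'' claim then follows by dividing the resulting vector $s\delta_{[n]}+O(1)$ by $s$ and letting $s\to\infty$, giving $\delta_{[n]}\in\overline{\RR_{\geq0}\,\Sigma_\alpha^n}$.

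I expect the only real subtlety to be the simultaneous fit of all three bounds with the sharp constant $C=\alpha'\log n$. The choice $1-t\approx2^{-s/\alpha'}$ is dictated by the full-system target $S_\alpha(\rho)\approx s$, and one must check that at this value the worst marginal—the $(n-1)$-party one, whose zero-atom weight shrinks to exactly $\frac1n$—still has entropy at most $C$, and that the windows for $1-t$ coming from the lower and upper bounds on $S_\alpha(\rho)$ overlap. The latter reduces to the harmless inequality $(\alpha-1)C=\alpha\log n\geq1$, which holds for $n\geq2$, $\alpha>1$; the former is precisely why the stated constant $C$ is the natural one.
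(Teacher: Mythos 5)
Your proposal is correct and is essentially the paper's own construction: both place the ``spread-out'' mass on strings with exactly one non-zero coordinate, so that every proper marginal acquires an atom of weight at least $\frac1n$ and hence $S_\alpha(\rho_J)\leq\frac{\alpha}{\alpha-1}\log n=C$, while the full state has only small atoms. The one (cosmetic) difference is that the paper omits the all-zero atom and instead puts weight $\frac1n R(x_i)$ on the star strings with $R$ an arbitrary distribution satisfying $H_\alpha(R)=s$, which makes $S_\alpha(\rho_{[n]})=\log n+s$ exact and avoids your window-overlap tuning of $t$ and $M$.
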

\begin{proof}
  The following argument is presented for $\alpha < \infty$; to obtain
  the claims in the case $\alpha = \infty$ one simply takes the limit.

  For an integer $M$ consider ``local'' alphabets
  $\cX_i := \{ 0 \} \cup [M]$ and define distributions $Q_{[n]:R}$
  on the Cartesian product $\cX_1\times\cdots\times\cX_n$ as follows:
  \[
    Q_{[n]:R}(x_1,\ldots,x_n)
        := \begin{cases}
             \frac1n R(x_i) & \text{ if } x_i \neq 0 \text{ and } x_j = 0 \forall j\neq i, \\
             0              & \text{ otherwise},
           \end{cases}
  \]
  where $R$ is an arbitrary probability distribution on $[M]$.

  For the corresponding classical state, it is straightforward to verify that
  \[
    S_\alpha(\rho_{[n]})
         = \frac{1}{1-\alpha}\log\left( n \sum_{x=1}^M \left(\frac1n R(x)\right)^\alpha \right)
         = \log n + H_\alpha(R).
  \]
  On the other hand, the marginal state $\rho_I$ for any $I \subsetneq [n]$ has
  an eigenvalue $\lambda\geq \frac1n$, hence
  \begin{equation*}
    S_\alpha(\rho_I) \leq 
\frac{1}{1-\alpha}\log \lambda^{\alpha}
\le
\frac{1}{1-\alpha}\log \left(\frac1n\right)^\alpha = C,
  \end{equation*}
  and we are done, since we can choose $M$ large enough to accommodate
  a distribution $R$ on $[M]$ with $H_\alpha(R) = s$.
\end{proof}

\begin{proposition}
  \label{prop:1-alpha-infty}
  For all $s > 0$ and $I\in\powersetnonempty[n]$, there is a vector
  $s\delta_I + O(1) \in \Sigma_\alpha^n$.
  To be precise, there exists a constant $C$, which may be chosen
  as $C=\frac{1}{1-\frac{1}{\alpha}}\log(|I|(n+1-|I|))$, and states with
  \[
    s \leq S_\a(\rho_I) \leq s + C, \quad \text{and} \quad
    S_a(\rho_J) \leq C \ \text{ for } J\neq I.
  \]
  In particular, $\delta_I \in \overline{\RR_{\geq 0}\,\Sigma_\alpha^n}$.
\end{proposition}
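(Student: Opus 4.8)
The plan is to give a single construction on $n+1$ parties that specialises, for each fixed $I$, to the ``one active party'' mechanism of Lemma~\ref{lemma:1-alpha-infty}, but applied to the $k\ell$ \emph{pairs} $(i,j)$ straddling the bipartition rather than to single parties; pleasantly, this one construction subsumes the special cases $I=[n]$ and $|I|=1$ that needed separate treatment for $\alpha<1$. Write $k=|I|$, $\ell=n+1-|I|$, relabel so that $I=\{1,\ldots,k\}$ and $I^c=\{k+1,\ldots,k+\ell=n+1\}$, and let party $n+1$ be the purifying system. For $i\in I$ and $j\in\{1,\ldots,\ell\}$ take an $M$-dimensional space $\kil_{ij}$ with basis $\{\ket{x}_{ij}\}_{x=1}^M$, and set $\hil_i=\bC\ket{0}_i\oplus\bigoplus_{j}\kil_{ij}$ for $i\in I$ and $\hil_{k+j}=\bC\ket{0}_{k+j}\oplus\bigoplus_{i}\kil_{ij}$. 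Throughout I would use the elementary bound, valid for $\alpha>1$, that any state $\tau$ satisfies $\Tr\tau^\alpha\ge\norm{\tau}^\alpha$ and hence
\[
  S_\alpha(\tau)\le\frac{1}{1-\frac1\alpha}\log\frac{1}{\norm{\tau}},
\]
so that a marginal with a large top eigenvalue automatically has small $\alpha$-entropy; the case $\alpha=\infty$ then follows by taking limits, exactly as in Lemma~\ref{lemma:1-alpha-infty}.

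For a distribution $R$ on $[M]$ I would define the pure ``superposition over active pairs''
\[
  \ket{\psi}=\frac{1}{\sqrt{k\ell}}\sum_{i=1}^{k}\sum_{j=1}^{\ell}\sum_{x=1}^{M}\sqrt{R(x)}\,\ket{x}_{ij}^{(i)}\ket{x}_{ij}^{(k+j)}\bigotimes_{i'\neq i}\ket{0}_{i'}\bigotimes_{j'\neq j}\ket{0}_{k+j'},
\]
in which exactly one pair $(i,j)$ is ``active'', carrying the maximally correlated vector $\sum_x\sqrt{R(x)}\ket{x}_{ij}^{(i)}\ket{x}_{ij}^{(k+j)}$ shared between parties $i$ and $k+j$, all remaining parties being in their vacuum $\ket{0}$. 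The orthogonality of vacuum against the active summands, and of distinct summands $\kil_{ij}$, $\kil_{i'j}$ inside a single party, makes $\ket{\psi}$ a unit vector and, more importantly, kills all cross terms when tracing out $I^c$. One then reads off that $\rho_I=\tr_{I^c}\proj{\psi}$ is diagonal with spectrum consisting of $k\ell$ copies of $\frac1{k\ell}R$, whence $S_\alpha(\rho_I)=\log(k\ell)+H_\alpha(R)$. Choosing $R$ (for a fixed, large enough $M$) with $H_\alpha(R)=s$ gives $S_\alpha(\rho_I)=s+\log(k\ell)$, which lies in $[s,s+C]$ because $C=\frac{1}{1-1/\alpha}\log(k\ell)\ge\log(k\ell)$. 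Note this is exact, not a limit, so $\rho$ genuinely lies in $\Sigma_\alpha^n$.

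It remains to bound $S_\alpha(\rho_J)$ for $J\in\powersetnonempty[n]$, $J\neq I$. The key estimate is that the all-vacuum diagonal entry of $\rho_J$ satisfies
\[
  \bra{0_J}\rho_J\ket{0_J}\ge\frac{(k-a)(\ell-b)}{k\ell},\qquad a:=|J\cap I|,\ \ b:=|J\cap I^c|,
\]
since every active-pair configuration with $i\notin J$ and $k+j\notin J$ restricts to all-vacuum on $J$, while the remaining cross terms are non-negative; consequently $\norm{\rho_J}\ge\frac{(k-a)(\ell-b)}{k\ell}$. Because $n+1\notin J$ we always have $b\le\ell-1$, so $\ell-b\ge1$; hence whenever $a<k$ we get $\norm{\rho_J}\ge\frac{1}{k\ell}$ and therefore $S_\alpha(\rho_J)\le C$ directly.

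The main obstacle is exactly the remaining case $a=k$, i.e.\ $J\supsetneq I$, where the vacuum estimate degenerates to $0$ and says nothing. Here I would pass to the complement: by purity of $\ket{\psi}$, $S_\alpha(\rho_J)=S_\alpha(\rho_{J^c})$, and writing $J=I\cup J'$ with $\emptyset\neq J'\subseteq I^c\setminus\{n+1\}$ one has $J^c=I^c\setminus J'\subsetneq I^c$, so $J^c$ contains no party of $I$. Applying the same vacuum estimate to $J^c$ gives $\norm{\rho_{J^c}}\ge\frac{k\,|J'|}{k\ell}\ge\frac1\ell\ge\frac{1}{k\ell}$, so again $S_\alpha(\rho_J)\le C$. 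This reliance on the extra purifying party to flip ``contains all of $I$'' into ``omits all of $I$'' is the crux, and it also explains the symmetric product form $|I|(n+1-|I|)$ of the constant. Finally, the displayed bounds say precisely that the entropy vector equals $s\delta_I+O(1)$; dividing by $s$ and letting $s\to\infty$ sends the $O(1)$ terms to $0$, yielding $\delta_I\in\overline{\RR_{\geq0}\,\Sigma_\alpha^n}$.
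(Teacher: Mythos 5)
Your proof is correct, and its core is the same construction as the paper's: an $(n{+}1)$-party pure state that uniformly superposes $k\ell$ branches, one for each pair $(i,j)\in I\times I^c$, with the pair sharing the purification $\sum_x\sqrt{R(x)}\ket{x}\ket{x}$ of a distribution $R$ having $H_\alpha(R)=s$; the constant $C=\frac{1}{1-1/\alpha}\log(|I|\,|I^c|)$ and the key mechanism (an eigenvalue $\geq 1/(k\ell)$ in a marginal forces $S_\alpha\leq C$ when $\alpha>1$) are identical, and both versions give the exact value $S_\alpha(\rho_I)=s+\log(k\ell)$. The one genuine difference is how branches are kept distinguishable in the marginals. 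The paper tensors an orthogonal flag $\ket{ij}$ onto every party in each branch, so that \emph{every} marginal $\rho_J$ is an explicit direct sum over pairs; the dichotomy ``there exist $i\in I$, $j\in I^c$ with both in $J$ or both in $J^c$'' then exhibits a pure block of weight $1/(k\ell)$ in either case, with no further work. You instead put inactive parties in a common vacuum $\ket{0}$, and bound the all-vacuum diagonal entry of $\rho_J$; this handles $J$ missing some party of $I$ (your cross terms in fact vanish by orthogonality, they are not merely non-negative), but degenerates when $J\supsetneq I$, which is why you need the additional step of passing to $J^c$ via purity of $\ket{\psi}$ and the isospectrality of complementary marginals. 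That step is valid, so your argument goes through; what it buys is slightly leaner local spaces (no flag registers), at the price of a case split and an essential reliance on the global state being pure, whereas the paper's flag trick makes all cases uniform and purely block-diagonal.
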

\begin{proof}
  If $I=[n]$, this is lemma~\ref{lemma:1-alpha-infty}, but we shall 
  present a direct quantum construction, of a pure state on $n+1$ parties;
  hence view $I$ as a subset of $[n+1]$. Pick a distribution $R$ on
  some finite alphabet $[M]$ with $H_\alpha(R) = s$ and fix a purification
  of $R$ -- understood as a quantum state $R = \sum_x R(x) \proj{x}$ --,
  $\ket{\mu} = \sum_{x=1}^M \sqrt{R(x)} \ket{x}\ket{x}$.
  
  Now, construct the following $(n+1)$-party pure state vector,
  \[
    \ket{\psi} := \sqrt{\frac{1}{|I||I^c|}} 
                      \bigoplus_{i\in I,\, j\in I^c} \ket{\mu}_{ij} \ox \ket{ij}^{\ox [n]},
  \]
  where $I^c = [n+1]\setminus I$, and $\rho := \tr_{n+1} \proj{\psi}$. 
  However, our reasoning will be based on the pure state $\proj{\psi}$.
  Above, the direct sum means that we take direct sums of the
  local Hilbert spaces, which we indicate by the label ``$ij$'' attached
  to each local system, whereas $\ket{\mu}_{ij}$ is the state shared between
  parties $i$ and $j$.
  
  It is straightforward to check that
  \[
    \rho_I = \frac{1}{|I||I^c|} \bigoplus_{i\in I,\, j\in I^c} R_{ij} \ox \proj{ij}^{I},
  \]
  hence $S_\alpha(I) = s + \log(|I||I^c|)$. On the other hand, if $J \subset [n]$
  with $J\neq I$, then there exist $i\in I$ and $j\in I^c$ such that either both $i,j \in J$ or
  both $i,j \in J^c$. Thus, $\rho_J$ has a direct sum component $\proj{\mu}$
  and as a consequence an eigenvalue $\geq \frac{1}{|I||I^c|}$, hence
  \[
    S_\alpha(J) \leq \frac{1}{1-\alpha}\log \left(\frac{1}{|I||I^c|}\right)^\alpha = C,
  \]
  and we are done.
\end{proof}

\begin{theorem}
  \label{thm:1-alpha-infty}
  For every element $v \in \RR_{\geq 0}^{\powersetnonempty[n]}$ there is
  a vector $v + O(1) \in \Sigma_\alpha^n$. To be precise, there exists a
  constant $C$, which may be chosen as
  $C=\frac{1}{1-\frac{1}{\alpha}}(\log(n+1))2^{n+1}$,
  and states with
  \[
    \bigl| S_\alpha(\rho_I) - v_I \bigr| \leq C \ \text{ for all } I \subset [n].
  \]
  In other words, there are no non-trivial homogeneous
  inequalities constraining the R\'enyi entropies (with fixed $\alpha>1$)
  of a multi-party state: The only restriction is non-negativity:
  \[
    \overline{\RR_{\geq 0}\Sigma_\alpha^n} = \RR_{\geq 0}^{\powersetnonempty[n]}.
  \]
\end{theorem}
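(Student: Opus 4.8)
The plan is to mimic the proof of Theorem~\ref{thm:0-alpha-1}: I would tensor together the single-coordinate states supplied by Proposition~\ref{prop:1-alpha-infty} and invoke extensivity of the R\'enyi entropy. The essential difference from the case $\alpha<1$ is that here the off-coordinate entropies no longer tend to zero but only remain bounded by a constant, so the errors accumulate rather than vanish. I expect the only real work to be the bookkeeping of these accumulated constants, together with the observation that a bounded \emph{additive} error is precisely what a homogeneous (rather than exact) approximation result needs.

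First I would decompose $v = \sum_{I\in\powersetnonempty[n]} v_I\delta_I$. For each $I$ with $v_I>0$, Proposition~\ref{prop:1-alpha-infty} applied with $s=v_I$ produces a state $\rho^{(I)}$ on the $n$-party system with $v_I \le S_\alpha(\rho^{(I)}_I) \le v_I + C_I$ and $S_\alpha(\rho^{(I)}_J)\le C_I$ for $J\neq I$, where $C_I = \frac{1}{1-\frac{1}{\alpha}}\log(|I|(n+1-|I|))$; for $I$ with $v_I=0$ I would simply omit the corresponding factor. Setting $\rho := \bigotimes_{I} \rho^{(I)}$, each marginal factorizes as $\rho_J = \bigotimes_{I} \rho^{(I)}_J$ (tracing out a party factorizes across the $I$-index), so extensivity gives $S_\alpha(\rho_J) = \sum_{I} S_\alpha(\rho^{(I)}_J)$ for every $J$.

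Now the bookkeeping. In this sum the lower bound $S_\alpha(\rho_J)\ge v_J$ holds, either because the $I=J$ term already contributes at least $v_J$, or trivially when $v_J=0$; and for the upper bound the $I=J$ term is at most $v_J+C_J$ while every other term is at most $C_I$, so that $0 \le S_\alpha(\rho_J) - v_J \le \sum_{I} C_I$. It remains to bound $\sum_{I} C_I$. Using $|I|(n+1-|I|) \le (n+1)^2$, hence $\log(|I|(n+1-|I|)) < 2\log(n+1)$, together with $|\powersetnonempty[n]| = 2^n-1 < 2^n$, one finds $\sum_{I} C_I < \frac{1}{1-\frac{1}{\alpha}}\,2^{n+1}\log(n+1) = C$, which is the claimed constant and establishes $\bigl|S_\alpha(\rho_I) - v_I\bigr|\le C$ for all $I$.

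Finally, the homogeneous statement $\overline{\RR_{\geq 0}\Sigma_\alpha^n} = \RR_{\geq 0}^{\powersetnonempty[n]}$ follows by a scaling argument, and this is exactly where the scale-independence of the additive $O(1)$ error is crucial. Given an arbitrary $v\in\RR_{\geq 0}^{\powersetnonempty[n]}$ and $\lambda>0$, applying the first part to $\lambda v$ yields a state $\rho$ with $\bigl|S_\alpha(\rho_I) - \lambda v_I\bigr|\le C$ for all $I$; dividing the resulting entropy vector by $\lambda$ produces a point of $\RR_{\geq 0}\Sigma_\alpha^n$ within $C/\lambda$ of $v$ in the $\ell_\infty$ norm, and letting $\lambda\to\infty$ places $v$ in the closure. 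I do not anticipate a genuine obstacle: the construction is inherited wholesale from Proposition~\ref{prop:1-alpha-infty}, and the only points demanding attention are tracking the constants and recognising that the bounded error, harmless under rescaling, suffices for the cone statement even though (unlike the $\alpha<1$ case) it cannot be driven to zero.
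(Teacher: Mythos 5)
Your proposal is correct and follows essentially the same route as the paper's own (very terse) proof: decompose $v=\sum_I v_I\delta_I$, take the coordinate states from proposition~\ref{prop:1-alpha-infty}, tensor them together, and use extensivity so that the off-coordinate constants merely accumulate into the claimed $C=\frac{1}{1-\frac{1}{\alpha}}(\log(n+1))2^{n+1}$. You in fact supply details the paper omits --- the explicit summation of the per-factor constants via $|I|(n+1-|I|)<(n+1)^2$ and the $\lambda\to\infty$ rescaling argument for the closure statement --- so no gaps remain.
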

\begin{proof}
  Using proposition~\ref{prop:1-alpha-infty} this is trivial:
  $v = \sum_{I\in\powersetnonempty[n]} v_I \delta_I$, and
  for each subset $I\subset [n]$ we can find an $n$-party state $\rho^{(I)}$
  such that its entropies approximate $v_I\delta_I$, i.e.
  \[
    \bigl| S_\alpha(\rho^{(I)}_J) - v_I\delta_{I,J} \bigr|
                \leq \frac{1}{1-\frac{1}{\alpha}}\log n \ \text{ for all } I,J \subset [n].
  \]
  Letting $\rho := \bigotimes_{I\in\powersetnonempty[n]} \rho^{(I)}$,
  we are done.
\end{proof}

\begin{remark}
  \label{rem:1-infty-closure}
  From theorem~\ref{thm:1-alpha-infty} we can see that $\RR_{\geq 0}\Sigma_\alpha^n$
  is not a closed set (assuming $n\geq 2$).
  \normalfont
  This is argued in the same way as in remark~\ref{rem:0-1-closure}
  for the case $\alpha < 1$.
\end{remark}

\medskip
The above theorem~\ref{thm:1-alpha-infty} looks quite similar to
theorem~\ref{thm:0-alpha-1} for $0<\alpha<1$. However, whereas
there we could conclude that there are no nontrivial inequalities
whatsoever for the R\'enyi entropies of a multi-party state, here we only
get that there cannot be any further \emph{homogeneous} inequalities
apart from non-negativity.

That this is the most we can hope to obtain follows from the observation
that there are other, non-linear and non-homogeneous, inequalities
constraining the entropy vectors. In fact, $\overline{\Sigma_\alpha^n}$
is not a cone at all for $\alpha > 1$!

An example of such an inequality was presented by
Audenaert~\cite{Audenaert:inequality}: The $\alpha$-Schatten norm
$\|\rho\|_\alpha = \left( \tr |\rho|^\alpha \right)^{1/\alpha}$
(the operator norm $\|\rho\|_\infty  =\|\rho\|$ is obtained in the
limit $\alpha\rightarrow\infty$) is related to the $\alpha$-entropy by
\[
  S_\alpha(\rho) = \frac{\alpha}{1-\alpha} \log \|\rho\|_\alpha,
     \qquad
  S_\infty(\rho) = - \log \|\rho\|_\infty,
\]
and satisfies
\[
  \| \rho_A \|_\a + \| \rho_B \|_\a \leq 1 + \| \rho_{AB} \|_\a
\]
for arbitrary bipartite state $\rho_{AB}$.

The following is a strengthening of Audenaert's inequality.
\begin{proposition}
  \label{prop:Aud+}
  Let $\rho_{AB}$ be a bipartite state and $1 < \alpha \leq \infty$. Define
  \begin{equation*}
    M_\a:=\max\left\{\left(\norm{\rho_A}_{\infty}/\norm{\rho_A}_\alpha\right)^{\a-1},
                     \left(\norm{\rho_B}_{\infty}/\norm{\rho_B}_\alpha\right)^{\alpha-1}
              \right\} \quad \text{for} \quad \a<\infty,
  \end{equation*}
  as well as $M_{\infty}:=\lim_{\a\to+\infty}M_\a=\max\{1/m_A,1/m_B\}$, where
  $m_A$ and $m_B$ are the multiplicity of
  $\norm{\rho_A}_{\infty}$ and $\norm{\rho_B}_{\infty}$ as an eigenvalue of
  $\rho_A$ and $\rho_B$, respectively.

  Then,
  \begin{align}
    \norm{\rho_A}_\a+\norm{\rho_B}_\a &\leq
             \min\left\{\kappa+\frac{1}{\kappa}\norm{\rho_{AB}}_\a\,:\,M_\a\le\kappa\right\}
                                                                             \label{Audenaert+} \\
               &=\begin{cases}
                  2\sqrt{\norm{\rho_{AB}}_\a}            &\text{ if } M_\a\le\sqrt{\norm{\rho_{AB}}_\a} \\
                  M_\a+\frac{1}{M_\a}\norm{\rho_{AB}}_\a &\text{ if } \sqrt{\norm{\rho_{AB}}_\a} \leq M_\a
                 \end{cases}
                    \label{Audenaert+2} \\
               &\leq 1+\norm{\rho_{AB}}_\a. \label{Audenaert+3}
  \end{align}
  The last inequality holds with equality if and only if at least one of $\rho_A,\rho_B$ or $\rho_{AB}$ is a pure state. Moreover, we have
  $\norm{\rho_A}_\a+\norm{\rho_B}_\a=1+\norm{\rho_{AB}}_\a$ if and only if
  $\rho_A$ or $\rho_B$ is a pure state.
\end{proposition}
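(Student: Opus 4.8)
The plan is to prove the first line (\ref{Audenaert+}) from scratch and then read off (\ref{Audenaert+2}), (\ref{Audenaert+3}) and the equality cases as consequences. Write $\beta=\alpha/(\alpha-1)$ for the Hölder conjugate of $\alpha$ ($\beta=1$ when $\alpha=\infty$) and introduce the dual optimisers $X_A:=\rho_A^{\alpha-1}/\norm{\rho_A}_\alpha^{\alpha-1}$ and $X_B:=\rho_B^{\alpha-1}/\norm{\rho_B}_\alpha^{\alpha-1}$; for $\alpha=\infty$ take instead $X_A=\Pi_A/m_A$ with $\Pi_A$ the projection onto the top eigenspace of $\rho_A$, and similarly $X_B$. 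In every case $X_A,X_B\ge 0$, $\norm{X_A}_\beta=\norm{X_B}_\beta=1$, $\Tr(\rho_A X_A)=\norm{\rho_A}_\alpha$, $\Tr(\rho_B X_B)=\norm{\rho_B}_\alpha$, and crucially $\norm{X_A}_\infty=(\norm{\rho_A}_\infty/\norm{\rho_A}_\alpha)^{\alpha-1}$ (resp.\ $1/m_A$), so that $M_\alpha=\max\{\norm{X_A}_\infty,\norm{X_B}_\infty\}$. Hence the hypothesis $\kappa\ge M_\alpha$ is exactly what guarantees $\kappa I_A-X_A\ge 0$ and $\kappa I_B-X_B\ge 0$.

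The heart of the argument is then a single operator inequality: for $\kappa\ge M_\alpha$,
\[
  \kappa I+\tfrac1\kappa X_A\otimes X_B-X_A\otimes I-I\otimes X_B
    =\tfrac1\kappa\,(\kappa I_A-X_A)\otimes(\kappa I_B-X_B)\ge 0,
\]
the right-hand side being a tensor product of two positive semidefinite operators. Applying $\Tr(\rho_{AB}\,\cdot\,)$, and using $\Tr(\rho_{AB}(X_A\otimes I))=\norm{\rho_A}_\alpha$, $\Tr(\rho_{AB}(I\otimes X_B))=\norm{\rho_B}_\alpha$, $\Tr\rho_{AB}=1$, and Hölder $\Tr(\rho_{AB}(X_A\otimes X_B))\le\norm{\rho_{AB}}_\alpha\norm{X_A\otimes X_B}_\beta=\norm{\rho_{AB}}_\alpha$, delivers $\norm{\rho_A}_\alpha+\norm{\rho_B}_\alpha\le\kappa+\frac1\kappa\norm{\rho_{AB}}_\alpha$ for every $\kappa\ge M_\alpha$; taking the infimum over such $\kappa$ is (\ref{Audenaert+}). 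The case $\alpha=\infty$ is identical with $\beta=1$, or is obtained by letting $\alpha\to\infty$.

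Both remaining inequalities are elementary. The function $g(\kappa)=\kappa+\frac1\kappa\norm{\rho_{AB}}_\alpha$ is strictly convex with unconstrained minimiser $\kappa=\sqrt{\norm{\rho_{AB}}_\alpha}$, so its minimiser on $[M_\alpha,\infty)$ is $\max\{M_\alpha,\sqrt{\norm{\rho_{AB}}_\alpha}\}$, producing the two branches of (\ref{Audenaert+2}). Monotonicity of the Schatten norms in the index gives $\norm{\rho_A}_\infty\le\norm{\rho_A}_\alpha$, whence $M_\alpha\le 1$; thus $\kappa=1$ is admissible and $g(1)=1+\norm{\rho_{AB}}_\alpha$ yields (\ref{Audenaert+3}).

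For the equality cases I would track when each bound is tight. The minimum equals $g(1)$ iff $1$ is the minimiser of $g$ on $[M_\alpha,\infty)$, i.e.\ $\max\{M_\alpha,\sqrt{\norm{\rho_{AB}}_\alpha}\}=1$; as both terms are $\le 1$ this means $M_\alpha=1$ or $\norm{\rho_{AB}}_\alpha=1$. Since $\norm{\rho_{AB}}_\alpha=1\iff\Tr\rho_{AB}^\alpha=1\iff\rho_{AB}$ pure (using $\lambda^\alpha\le\lambda$ with equality only at $\lambda\in\{0,1\}$), and $M_\alpha=1\iff\norm{\rho_A}_\infty=\norm{\rho_A}_\alpha$ or $\norm{\rho_B}_\infty=\norm{\rho_B}_\alpha\iff\rho_A$ or $\rho_B$ pure, this is exactly the stated trichotomy for (\ref{Audenaert+3}). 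For the ``Moreover'' part, $(\Leftarrow)$ is immediate because a pure marginal forces $\rho_{AB}=\rho_A\otimes\rho_B$, making both sides equal to $1+\norm{\rho_{AB}}_\alpha$; for $(\Rightarrow)$, equality in the full chain forces the minimum to equal $1+\norm{\rho_{AB}}_\alpha$, so one of the three states is pure, and the only remaining possibility — $\rho_{AB}$ pure with $\rho_A,\rho_B$ mixed — is excluded since then $\norm{\rho_A}_\alpha=\norm{\rho_B}_\alpha<1$ by the Schmidt decomposition, making the first inequality strict. The one genuinely inventive step is spotting that the slack in (\ref{Audenaert+}) factors as $\frac1\kappa(\kappa I_A-X_A)\otimes(\kappa I_B-X_B)$; after that the argument is mechanical, and the only real care is needed in this last equality case, where (\ref{Audenaert+3}) is tight but the underlying inequality is not.
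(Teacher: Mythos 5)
Your proof is correct, and it reaches inequality \eqref{Audenaert+} by a genuinely different mechanism than the paper, although the two arguments share the same skeleton: both introduce the H\"older-dual witnesses $X_A,X_B$ (the paper's $X,Y$) with $\norm{X_A}_\beta=\norm{X_B}_\beta=1$, $\Tr(\rho_AX_A)=\norm{\rho_A}_\alpha$ and $M_\alpha=\max\{\norm{X_A}_\infty,\norm{X_B}_\infty\}$, and both finish with the same analysis of $g(\kappa)=\kappa+\frac{1}{\kappa}\norm{\rho_{AB}}_\alpha$ and the same equality-case discussion. Where you differ is the central estimate. The paper, following Audenaert, truncates to the positive part $Z_\kappa=(X_A\otimes I+I\otimes X_B-\kappa I)_+$, proves $\norm{Z_\kappa}_\beta\le 1/\kappa$ through a scalar convexity lemma (the map $a\mapsto\norm{(y+a-\kappa)_+}_\beta$ is convex, vanishes at $0$, equals $1$ at $\kappa$), and only then applies H\"older. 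You bypass the truncation entirely with the factorization
\[
  \kappa I+\tfrac{1}{\kappa}\,X_A\otimes X_B-X_A\otimes I-I\otimes X_B
   =\tfrac{1}{\kappa}\,(\kappa I-X_A)\otimes(\kappa I-X_B)\ \ge\ 0
   \qquad (\kappa\ge M_\alpha),
\]
and apply H\"older to $X_A\otimes X_B$, whose $\beta$-norm is exactly $1$ by multiplicativity of Schatten norms under tensor products. This is a real simplification: one algebraic identity replaces the positive-part manipulation and the convexity lemma, and it makes completely transparent why the constraint $\kappa\ge M_\alpha$ is needed (positivity of the two factors). At the scalar level your identity amounts to $(x_i+y_j-\kappa)_+\le x_iy_j/\kappa$ for $x_i,y_j\le\kappa$, which implies the paper's norm bound on $Z_\kappa$, so your route subsumes theirs. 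The remainder --- the two branches of \eqref{Audenaert+2} from constrained minimization of the strictly convex $g$, inequality \eqref{Audenaert+3} from $M_\alpha\le 1$, the identifications $\norm{\rho_{AB}}_\alpha=1$ iff $\rho_{AB}$ pure and $M_\alpha=1$ iff $\rho_A$ or $\rho_B$ pure, and the Schmidt-decomposition argument excluding a pure $\rho_{AB}$ with mixed marginals in the ``Moreover'' part --- coincides with the paper step for step. One caveat you share with the paper: for $\alpha=\infty$ the equivalence ``$M_\infty=1$ iff $\rho_A$ or $\rho_B$ pure'' fails, since $M_\infty$ only sees the multiplicity of the top eigenvalue (e.g.\ a product of two mixed states with non-degenerate largest eigenvalues has $M_\infty=1$), so the equality characterizations should really be read for $\alpha<\infty$; this is a defect inherited from the proposition's statement, not of your argument relative to the paper's.
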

\begin{proof}
We follow Audenaert's proof~\cite{Audenaert:inequality} with a slight modification.
Let $\rho_A=\sum_{i}\lambda_i\pr{e_i}$ and $\rho_B=\sum_{j}\eta_j\pr{f_j}$ be eigen-decompositions
such that the $\lambda$'s and the $\eta$'s are arranged in a decreasing
order. For $\a=\infty$, define $X:=\sum_{i=1}^{m_A}\frac{1}{m_A}\pr{e_i}$,
$Y:=\sum_{i=1}^{m_B}\frac{1}{m_B}\pr{f_i}$, and let $\b:=1$.
For $\a<\infty$, define $X:=\sum_i x_i\pr{e_i}$ and $Y:=\sum_j y_j\pr{f_j}$,
where
\begin{equation*}
  x_i := \lambda_i^{\a-1}/\norm{\rho_A}_\a^{\a-1}
             \quad\text{and}\quad
  y_j := \eta_j^{\a-1}/\norm{\rho_B}_\a^{\a-1},
\end{equation*}
and let $\b$ be such that $\frac{1}{\a}+\frac{1}{\b}=1$.
Let $x$ and $y$ be the vectors formed of the $x_i$'s  and $y_j$'s, respectively.
Then $\norm{X}_\b=\norm{Y}_\b=\norm{x}_\b=\norm{y}_\b=1$ and $\norm{\rho_A}_\a=\Tr X\rho_A$ and
$\norm{\rho_B}_\a=\Tr Y\rho_B$. Hence we have, for any real number $\kappa$, that
\begin{equation}\begin{split}
  \norm{\rho_A}_\a+\norm{\rho_B}_\a
      &= \Tr (X\otimes \1_B+\1_A\otimes Y)\rho_{AB} \\
      &= \kappa+\Tr (X\otimes \1_B+\1_A\otimes Y-\kappa \1_A\otimes \1_B)\rho_{AB} \\
      &\le \kappa+\Tr (X\otimes \1_B+\1_A\otimes Y-\kappa \1_A\otimes \1_B)_+\rho_{AB} \\
      &= \kappa+\Tr Z_{\kappa}\rho_{AB},
  \label{upper bound1}
\end{split}\end{equation}
where $Z_{\kappa}:=(X\otimes \1_B+\1_A\otimes Y-\kappa \1_A\otimes \1_B)_+$
is the positive part.

Consider now the function $a\mapsto f_{\kappa}(a):=\left(\sum_j(y_j+a-\kappa)_+^{\beta}\right)^{\frac{1}{\beta}}=\norm{(y+a-\kappa)_+}_\b$.
This function is convex, $f_{\kappa}(\kappa)=\norm{y}_\b=1$,
and $f_{\kappa}(0)=0$ if we assume that
$\kappa\ge\max_j y_j=\norm{y}_{\infty}$.
Hence, under this assumption, $f_{\kappa}(a)\le a/\kappa$ for every $0\le a\le\kappa$. Thus, if
$\kappa\ge\norm{x}_{\infty}$ then
\begin{align*}
\norm{Z_\kappa}_\b^\beta&=
\norm{(X\otimes \1_B+\1_A\otimes Y-\kappa \1_A\otimes \1_B)_+}_\b^\beta=
\sum_{i,j}(x_i+y_j-\kappa)_+^\beta=\sum_i f_{\kappa}(x_i)^\beta\\
&\le
\sum_i (x_i/\kappa)^\beta=\norm{x}_\b^\beta/\kappa^\beta=1/\kappa^\beta,
\end{align*}
i.e., $\norm{Z_{\kappa}}_{\beta}\le 1/\kappa$. Due to H\"older's inequality,
$\Tr Z_{\kappa}\rho_{AB} \leq \norm{Z_{\kappa}}_\b\norm{\rho_{AB}}_\a \leq \norm{\rho_{AB}}_\a/\kappa$.
Combined with \eqref{upper bound1}, this yields
\begin{equation}
  \label{Aud kappa}
  \norm{\rho_A}_\a+\norm{\rho_B}_\a \leq \kappa+\frac{1}{\kappa}\norm{\rho_{AB}}_\a
                                      =: g(\kappa).
\end{equation}
Since this is true for every $\kappa\ge \max\{\norm{x}_{\infty},\norm{y}_{\infty}\}=M_\a$,
we have proved eq.~\eqref{Audenaert+}.

It is easy to see that $g$ is strictly convex and it has a global minimum
at $\sqrt{\norm{\rho_{AB}}_\a}\le 1$ with minimum value
of $2\sqrt{\norm{\rho_{AB}}_\a}$.
In particular, $g$ is strictly decreasing on the interval $\left(0,\sqrt{\norm{\rho_{AB}}_\a}\right]$
and strictly increasing on $\left[\sqrt{\norm{\rho_{AB}}_\a},1\right]$, and hence we obtain
eq.~\eqref{Audenaert+2}. The inequality (\ref{Audenaert+3}) is obvious.

By the above properties of $g$, we have equality in \eqref{Audenaert+3} if and only if
$\max\left\{M_\a,\sqrt{\norm{\rho_{AB}}_\a}\right\}=1$. 
Obviously, $\sqrt{\norm{\rho_{AB}}_\a}=1$ if and only if $\rho_{AB}$ is a pure state, and it is easy to see that
$M_\a=1$ if and only if $\rho_A$ or $\rho_B$ is pure.

If $\rho_A$ is a pure state then $\rho_{AB}=\rho_A\otimes\rho_B$ and 
$1+\norm{\rho_{AB}}_\a=
1+\norm{\rho_A}_\a\norm{\rho_B}_\a
=
1+\norm{\rho_B}_\a=
\norm{\rho_A}_\a+\norm{\rho_B}_\a$, and a completely similar argument works if $\rho_B$ is pure. On the other hand, 
if $\norm{\rho_A}_\a+\norm{\rho_B}_\a=1+\norm{\rho_{AB}}_\a$ then equality has to hold in \eqref{Audenaert+3},
and hence $\rho_A,\rho_B$ or $\rho_{AB}$ has to be pure. If $\rho_{AB}$ is pure but $\rho_A$ is not then
$\norm{\rho_{A}}_\a=\norm{\rho_{B}}_\a<1$  and hence 
$\norm{\rho_A}_\a+\norm{\rho_B}_\a<2=  1+\norm{\rho_{AB}}_\a$. This proves the last assertion about the equality case.
\end{proof}

\medskip
It appears that even for two parties, no description of $\Sigma_\alpha^2$
or $\overline{\Sigma_\alpha^2}$ is known. Nor, which other inequalities
there are constraining the latter.

\section{Classical case}
\label{sec:classical}
As remarked in the introduction, if restricted to classical states
$\rho$, the R\'enyi entropies are monotonic, i.e.
\begin{equation}
  \label{eq:monotonic}
  I \subset J \ \Rightarrow\ S_\alpha(\rho_I) \leq S_\alpha(\rho_J).
\end{equation}
(More generally, this holds for separable states, thanks to the
majorisation result of Kempe and Nielsen~\cite{KempeNielsen:sep}.)
In this section, we denote the set of $\alpha$-entropic vectors of
a generic distribution of $n$ random variables by $\Gamma_\alpha^n$.
In other words, this is a subset of $\Sigma_\alpha^n$, with the
restriction that the underlying states are classical.

The extremal rays of the convex cone $\cM\cO^n$ described by non-negativity and
eqs.~(\ref{eq:monotonic}) -- which thus contains $\Gamma_\alpha^n$ --
are easy to describe in combinatorial language~\cite{ILW:relent}:
They are precisely the rays spanned by the indicator functions
\[
  \dot\iota_{\cU}:I \longmapsto \begin{cases}
                                  1 & \text{ if } I \in \cU, \\
                                  0 & \text{ otherwise}.
                                \end{cases}
\]
of a nonempty set family $\cU \subset \powersetnonempty[n]$ with the property
that $J \supset I\in \cU$ implies $J\in \cU$ (hence always $[n] \in \cU$).
Such set families are known in combinatorics as ``upsets'' (or sometimes ``ideals'').

Some of the simplest upset are generated by a single element:
\[
  \uparrow\!J = \{ I \in \powersetnonempty[n] : J \subset I \}.
\]
These have the property that the unique minimal element of the
family is $J$. Note also that an upset contains, with each element
$J$, the entire $\uparrow\!J$. This means that every upset $\cU$
can be written
\[
  \cU = \bigcup_{J \in \cL} \uparrow\!J,
\]
with $\cL$ the set of minimal elements of $\cU$.

\medskip
For instance for $n=2$, there are four upsets and clearly all 
four associated rays
are attainable (whole ray for $\alpha < 1$, sufficiently long dilution
for $\alpha > 1$).

Next we show that this is the only difference to the quantum case,
at least as long as we are only looking for homogeneous inequalities.
Namely, the only homogeneous inequalities obeyed by the classical
$\alpha$-entropies are non-negativity and monotonicity.

\begin{theorem}
  \label{thm:0-alpha-1:c}
  Let $0 < \alpha < 1$.
  For any upset $\cU \subset \powersetnonempty[n]$ and all $s > 0$, there is
  a vector $s\,\dot\iota_{\cU} + O(1) \in \Gamma_\alpha^n$.
  To be precise, there exists a probability distribution $P$ with
  \[
    s \leq H_\a(P_I)\le s+\frac{\log|\cL|+1}{1-\alpha}\ds  \text{ for }\ds I\in\cU, \qquad
    H_\a(P_I) \leq \log|\cL|+1 \ds \text{ for }\ds I\not\in\cU.
  \]
  In particular, for $s \rar \infty$ we obtain
  $\dot\iota_{\cU} \in \overline{\RR_{\geq 0}\,\Gamma_\alpha^n}$.
  As a consequence, $\overline{\RR_{\geq 0}\,\Gamma_\alpha^n} = \cM\cO^n$.
\end{theorem}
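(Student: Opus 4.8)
The plan is to attain each generator $\dot\iota_\cU$ by a single classical mixture and then pass to the cone. First I would record how to hit a \emph{principal} upset $\uparrow\!J$: apply the distribution of Lemma~\ref{lemma:0-alpha-1} to the parties in $J$ only, leaving every party outside $J$ deterministically equal to $0$. For a subset $I$ the marginal then sees the full $J$-correlation exactly when $J\subseteq I$ (so $H_\alpha(P_I)\to s$) and only a proper sub-marginal of it otherwise ($H_\alpha(P_I)\to 0$), which is precisely $s\,\dot\iota_{\uparrow J}$. The naive idea for a general $\cU=\bigcup_{J\in\cL}\uparrow\!J$, namely to tensor these building blocks over $J\in\cL$, \emph{fails}: entropies add under tensor products, so a set $I$ lying above several minimal elements would receive entropy proportional to that number rather than $s$.

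The fix, and the technical heart of the proof, is to put all minimal elements into one mixture. I would let the alphabet of party $i$ consist of $0$ together with a separate block of $m_J$ symbols for each $J\in\cL$ with $i\in J$, the blocks for distinct $J$ being disjoint. With probability $1-t$ all parties are $0$; with probability $\tfrac{t}{|\cL|}$, for each $J$ we activate a ``type-$J$'' event spreading this mass uniformly over the $m_J^{|J|}$ configurations in which every party $i\in J$ takes a value in its $J$-block and every party outside $J$ is $0$. I would normalize the blocks by $m_J:=\lfloor N^{1/|J|}\rfloor$ so that each type carries $\approx N$ configurations, and fix $t$ by $(t/|\cL|)^\alpha N^{1-\alpha}=2^{s(1-\alpha)}-1=:K$, so that $t\to 0$ as $N\to\infty$.

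Then comes the marginal computation. For $I$, classify each $J\in\cL$: if $J\subseteq I$ the type-$J$ configurations survive injectively and contribute $\approx N$ eigenvalues, giving $(t/|\cL|)^\alpha N^{1-\alpha}=K$ to $\Tr\rho_I^\alpha$; if $\emptyset\neq I\cap J\subsetneq J$ the partial sub-marginal contributes $(t/|\cL|)^\alpha m_J^{|I\cap J|(1-\alpha)}=K\,N^{(1-\alpha)(|I\cap J|/|J|-1)}\to 0$; and if $I\cap J=\emptyset$ the whole type-$J$ mass collapses onto the all-zero configuration. Disjoint blocks guarantee that configurations coming from different $J$ never coincide on $I$ unless both project to all-zero. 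Writing $a_I=\#\{J\in\cL:J\subseteq I\}$, this yields $\Tr\rho_I^\alpha\to 1+a_I K$ as $N\to\infty$. For $I\in\cU$ we have $1\le a_I\le|\cL|$, hence $2^{s(1-\alpha)}\le 1+a_IK\le 2|\cL|\,2^{s(1-\alpha)}$, i.e. $s\le H_\alpha(P_I)\le s+\tfrac{\log|\cL|+1}{1-\alpha}$; for $I\notin\cU$ we have $a_I=0$, so $\Tr\rho_I^\alpha\to 1$ and $H_\alpha(P_I)\to 0$. (At finite $N$ the vacuum weight is $(1-t)^\alpha<1$, so the lower bound $\ge s$ is only reached in the limit; a harmless upward nudge of $t$, or absorbing the $o(1)$ into the already-slack constants, repairs this.) The main obstacle is exactly this over-counting when $I$ dominates several minimal elements: it cannot be removed, only capped by $|\cL|$, which is the source of the $\tfrac{\log|\cL|+1}{1-\alpha}$ gap.

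Finally I would deduce the cone identity. Letting $s\to\infty$ in the bounds shows $\dot\iota_\cU\in\overline{\RR_{\ge0}\,\Gamma_\alpha^n}$ for every upset $\cU$. For ``$\subseteq$'', classical $\alpha$-entropies satisfy non-negativity and monotonicity~\eqref{eq:monotonic}, so $\Gamma_\alpha^n\subseteq\cM\cO^n$, and since $\cM\cO^n$ is a closed convex cone, $\overline{\RR_{\ge0}\Gamma_\alpha^n}\subseteq\cM\cO^n$. For ``$\supseteq$'', extensivity gives $\Gamma_\alpha^n+\Gamma_\alpha^n\subseteq\Gamma_\alpha^n$, whence, via common-denominator rational approximation of the scalars, $\overline{\RR_{\ge0}\Gamma_\alpha^n}$ is closed under addition and non-negative scaling, i.e. is a closed convex cone; containing every extremal generator $\dot\iota_\cU$, it must contain the cone $\cM\cO^n$ they generate. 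The two inclusions give $\overline{\RR_{\ge0}\Gamma_\alpha^n}=\cM\cO^n$.
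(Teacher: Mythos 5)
Your proposal is correct and follows essentially the same route as the paper: a single mixture, over the minimal elements $\cL$ of $\cU$, of the Lemma~\ref{lemma:0-alpha-1} building blocks placed on pairwise disjoint alphabet blocks, with the same classification of each $J\in\cL$ by $J\subseteq I$, partial overlap, or disjointness from $I$, and the same $|\cL|$-capped over-counting as the source of the $\frac{\log|\cL|+1}{1-\alpha}$ slack. The only cosmetic differences are that the paper gives each component $Q_J$ its own vacuum symbol and boosts the target entropy to $s'=s+\frac{\alpha}{1-\alpha}\log|\cL|$, which makes the lower bound $\geq s$ hold already at finite block size, without the small upward nudge of $t$ that you (correctly) invoke to repair the finite-$N$ lower bound.
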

\begin{proof}
  Let $\cU = \bigcup_{J\in\cL} \uparrow\!J$, which can be achieved
  by choosing $\cL$ to be the minimal elements of $\cU$.
For each $i\in[n]$, let the local alphabet $\X_i$ be of the form
\begin{equation*}
\X_i=\cup_{J\in\cL}^*\X_{i,J},\ds\text{where}\ds
\begin{cases}
\X_{i,J}=\{0\}\cup[M_i],&\text{if}\ds i\in J,\\
|\X_{i,J}|=1&\text{if}\ds i\notin J,
\end{cases}
\end{equation*}
where $\cup^*$ denotes disjoint union, i.e.,
$\X_{i,J}$ and $\X_{i,J'}$ are disjoint if $J\ne J'$. For each $J\in\cL$, let 
$P_{t_J;\{M_i:i\in J\}}$ be a probability distribution on 
$\times_{i\in J}\X_{i,J}\subset\times_{i\in J}\X_i$, defined as in 
lemma~\ref{lemma:0-alpha-1}, with $t_J^\alpha M_{J}^{1-\alpha} = 2^{s'(1-\alpha)}$, 
where $s':=s+\frac{\alpha}{1-\alpha}\log|\cL|$.
Let $Q_J:=P_{t_J;\{M_i:i\in J\}}\otimes\delta_{J^c}$, where 
$\delta_{J^c}$ is the trivial probability distribution on the
single-element set $\times_{i\in J^c}\X_{i,J}$. 
Note that the supports of $Q_J$ and $Q_{J'}$ are disjoint for $J\ne J'$.
We claim that 
\begin{equation*}
P :=\frac{1}{|\cL|} \sum_{J\in\cL} Q_J= \frac{1}{|\cL|} \sum_{J\in\cL} P_{t_J;\{M_i:i\in J\}} \otimes \delta_{{J^c}}
\end{equation*}
has the desired properties for large enough $M_0 := \min_i M_i$.

Indeed, it is easy to see that for any $\emptyset\ne I\subset[n]$, 
\begin{align}
H_\alpha(P_I)&=
\frac{1}{1-\alpha}\log\bz
\frac{1}{|\cL|^\alpha}\bz
\sum_{I\cap J=\emptyset}1+\sum_{J\subset I}\bz(1-t_J)^\alpha+2^{s(1-\alpha)}\jz
+\sum_{\emptyset\ne J\setminus I\subsetneq J}\bz(1-t_J)^\alpha+2^{s(1-\alpha)}/M_{J\setminus I}^{1-\alpha}\jz
\jz\jz\nonumber\\
&=
\frac{1}{1-\alpha}\log\bz
\frac{1}{|\cL|^\alpha}\bz
\sum_{I\cap J=\emptyset}1+
\sum_{I\cap J\ne\emptyset}(1-t_J)^\alpha\jz
+\frac{1}{|\cL|^\alpha}\sum_{J\subset I}2^{s(1-\alpha)}
+\frac{1}{|\cL|^\alpha}\sum_{\emptyset\ne J\setminus I\subsetneq J}2^{s(1-\alpha)}/M_{J\setminus I}^{1-\alpha}
\jz\nonumber\\
&\xrightarrow[M_0\to+\infty]{}
\frac{1}{1-\alpha}\log\bz
|\cL|^{1-\alpha}+2^{s'(1-\alpha)}\frac{\#\{J\in\cL\,:\,J\subset I\}}{|\cL|^\alpha}
\jz,\nonumber\\
&=
\frac{1}{1-\alpha}\log\bz
|\cL|^{1-\alpha}+2^{s(1-\alpha)}\#\{J\in\cL\,:\,J\subset I\}
\jz,
\label{classical alpha 0-1}
\end{align}
where in all summations over $J$ above, $J\in\cL$ is implicit.
Now, if $I\in \cU$ then there exists a $J\in\cL$ such that $J\subset I$, and hence
the expression in \eqref{classical alpha 0-1} can be lower bounded by $s$, and upper bounded by 
$\frac{1}{1-\alpha}\log\bz|\cL|\bz 1+2^{s(1-\alpha)}\jz\jz< s+\frac{\log|\cL|+1}{1-\alpha}$.
On the other hand, if $I\notin \cU$ then $\#\{J\in\cL\,:\,J\subset I\}=0$ and hence
the expression in \eqref{classical alpha 0-1} is equal to $\log|\cL|$.
\end{proof}

\begin{theorem}
  \label{thm:1-alpha-infty:c}
  Let $1 < \alpha \leq \infty$.
  For any upset $\cU \subset \powersetnonempty[n]$ and $s > 0$, there is
  a vector $s\,\dot\iota_{\cU} + O(1) \in \Gamma_\alpha^n$.
  To be precise, there exists a constant $C$, which may be chosen
  as 
$C=\frac{1}{1-\frac{1}{\alpha}}\log(2n^k)$
  if $\cU$ is generated by $k$ elements ($k < 2^n$ always),
  and a probability distribution $Q$ with
  \[
    s \leq H_\a(Q_I) \leq s + C \text{ for } I\in\cU, \qquad
    H_\a(Q_I) \leq C \ \text{ for } I\not\in\cU.
  \]
  In particular, for $s \rar \infty$ we obtain
  $\dot\iota_{\cU} \in \overline{\RR_{\geq 0}\,\Gamma_\alpha^n}$.
  As a consequence, $\overline{\RR_{\geq 0}\,\Gamma_\alpha^n} = \cM\cO^n$.
\end{theorem}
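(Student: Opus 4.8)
The plan is to follow the same reduction as in Theorem~\ref{thm:0-alpha-1:c}: write $\cU=\bigcup_{J\in\cL}\uparrow\!J$ with $\cL$ the set of $k=|\cL|<2^n$ minimal elements, and exhibit a single probability distribution $Q$ whose marginal $\alpha$-entropies lie in $[s,s+C]$ on $\cU$ and in $[0,C]$ off $\cU$. Dividing by $s$ and letting $s\to\infty$ then kills the $O(1)$ error and delivers $\dot\iota_{\cU}\in\overline{\RR_{\ge0}\,\Gamma_\alpha^n}$, which is exactly an extremal ray of $\cM\cO^n$.

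The key point, and the main difference from the $0<\alpha<1$ case, is that the disjoint-support mixture $\frac{1}{|\cL|}\sum_J Q_J$ used there \emph{cannot} work for $\alpha>1$. For $\alpha>1$ the quantity $\Tr Q_I^\alpha=|\cL|^{-\alpha}\sum_J\Tr Q_{J,I}^\alpha$ is dominated by the term with the \emph{largest} $\Tr^\alpha$, i.e.\ the most concentrated, lowest-entropy component; so as soon as some generator $J'\not\subset I$ contributes a marginal with a heavy atom, $H_\a(Q_I)$ is dragged down to $O(1)$ even when $I\in\cU$. A mixture can therefore never make several marginals large at once. The remedy is to \emph{correlate} the per-generator gadgets through one common source, so that every authorized set reconstructs the same spread variable while every unauthorized set is left with a heavy ``all-zero'' atom.

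Concretely, I would fix a distribution $R$ on $[M]$ with $H_\a(R)=s$ and a variable $X\sim R$, and, for each $J\in\cL$, an independent uniform marker $m_J\in J$. Party $i$ receives the symbol $Y_i=(y_{i,J})_{J\in\cL:\,i\in J}$ with $y_{i,J}=X$ if $m_J=i$ and $y_{i,J}=0$ otherwise; let $Q$ be the law of $(Y_1,\dots,Y_n)$. Three facts then suffice. (i) $Y_I$ is a deterministic function of the independent data $(X,(m_J)_J)$, so by coarse-graining and additivity $H_\a(Q_I)\le H_\a(X)+\sum_{J\in\cL}H_\a(m_J)=s+\sum_{J\in\cL}\log|J|\le s+C$ for \emph{every} $I$. (ii) If a generator $J\subset I$ then its marker $m_J\in J$ is always visible, so $X$ is a function of $Y_J$ and hence, by monotonicity~\eqref{eq:monotonic} together with coarse-graining, $H_\a(Q_I)\ge H_\a(Q_J)\ge H_\a(X)=s$. (iii) If $I\notin\cU$ then every relevant generator has $|J\setminus I|\ge1$, so with probability $\prod_{J:\,J\cap I\neq\emptyset}\frac{|J\setminus I|}{|J|}\ge n^{-k}\ge\frac{1}{2n^{k}}$ no marker lands in $I$ and $Y_i=0$ for all $i\in I$; this single atom forces $H_\a(Q_I)\le\frac{1}{1-1/\a}\log(n^{k})\le C$.

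Assembling (i)--(iii) gives the displayed bounds, with the $\alpha=\infty$ case obtained by taking the limit exactly as in Lemma~\ref{lemma:1-alpha-infty}, and the $s\to\infty$ dilution produces every extremal ray $\dot\iota_{\cU}$. Since $\Gamma_\alpha^n\subset\cM\cO^n$ by non-negativity and monotonicity and $\cM\cO^n$ is a closed convex cone, while $\overline{\RR_{\ge0}\,\Gamma_\alpha^n}$ is itself a closed convex cone that, by the above, contains every extremal ray of the polyhedral cone $\cM\cO^n$, the two sets coincide, giving $\overline{\RR_{\ge0}\,\Gamma_\alpha^n}=\cM\cO^n$. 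The obstacle to anticipate is precisely the upper bound $H_\a(Q_I)\le s+C$ on \emph{authorized} sets $I$ that contain several generators simultaneously: a tensor product of independent gadgets would give $\approx\#\{J\subset I\}\cdot s$, far above $s+C$, and it is the shared-source correlation, making all relevant generators reveal the \emph{same} $X$, that keeps the total at $s+O(1)$.
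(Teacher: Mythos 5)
Your proposal is correct, but the construction is genuinely different from the paper's. The paper also builds on the marker gadgets $Q_{J:M}$ of lemma~\ref{lemma:1-alpha-infty}, but keeps the per-generator values \emph{independent}: it sets $Q=\frac12\Delta_M\oplus\frac12\bigotimes_{J\in\cL}\bigl(Q_{J:M}\otimes\delta_{J^c}\bigr)$, where $\Delta_M$ is the uniform distribution on the diagonal of $[M]^n$, $M\approx 2^{s-1}$, and the direct sum is over disjoint local alphabets. The overshoot you anticipate on sets containing several generators is indeed present in the tensor factor, but the paper suppresses it with $\Delta_M$: for $\alpha>1$ the R\'enyi entropy of a disjoint-support mixture is within $O(1)$ of the \emph{minimum} of the component entropies --- exactly the domination phenomenon you describe, but exploited in the construction's favour --- so the diagonal component caps every marginal at $\log M+\frac{1}{1-1/\alpha}\le s+O(1)$, while for $I\notin\cU$ every tensor factor contributes an atom $\geq 1/n$, leaving a heavy atom $\geq\frac{1}{2n^k}$ just as in your step (iii). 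Your construction instead eliminates the need for any cap by correlating the gadgets through the single shared source $X$: the upper bound $H_\alpha(Q_I)\leq H_\alpha(X)+\sum_{J\in\cL}H_\alpha(m_J)\leq s+k\log n$ holds for \emph{every} $I$ by additivity for independent variables plus monotonicity under deterministic coarse-graining (you rightly avoid invoking subadditivity, which fails for $\alpha\neq 1$), and the lower bound on $\cU$ comes from recoverability of $X$. What each buys: the paper's proof reuses lemma~\ref{lemma:1-alpha-infty} as a black box and stays close in form to the $0<\alpha<1$ proof of theorem~\ref{thm:0-alpha-1:c}; yours is a single construction with no mixture, achieves the same (in fact a slightly better) constant, and makes transparent that $\alpha>1$ enters only in the unauthorized-set step, since for $\alpha<1$ a heavy atom does not bound $H_\alpha$ from above. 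One remark: your closing step uses that $\overline{\RR_{\geq 0}\,\Gamma_\alpha^n}$ is a convex cone, which requires $\Gamma_\alpha^n+\Gamma_\alpha^n\subset\Gamma_\alpha^n$ (tensor products) and a rational-approximation argument; the paper records this additivity in section~2 and leaves the same step implicit, so your treatment matches its level of rigour.
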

\begin{proof}
Let $M$ be the smallest natural number such that 
$s\le 1+\log M$.
  We take as our building blocks the distributions $Q_{[n]:M}$ from
  lemma~\ref{lemma:1-alpha-infty} and its proof, for simplicity with
  uniform $R$ on $[M]$. Furthermore, define
  the following uniform distribution on the diagonal of $[M]^n$:
  \[
    \Delta_M := \frac{1}{M} \sum_{x=1}^M \delta_x\otimes\delta_x\otimes\cdots\otimes\delta_x.
  \]

  Now, for upset $\cU = \bigcup_{J\in\cL} \uparrow\!J$, let
  \[
    Q := \frac{1}{2} \Delta_M \oplus
         \frac{1}{2} \bigotimes_{J\in\cL} \bigl( Q_{J:M} \otimes \delta_{J^c} \bigr),
  \]
  where $\delta_{J^c}$ as before refers to a generic point mass for
  parties $J^c = [n]\setminus J$, the product over $J\in\cL$ implies
  Cartesian products of the local alphabets, and the direct sum
  likewise a direct sum of the \emph{local} alphabets.

  The first term in the direct sum makes sure that in each
  marginal the largest probability value occurring is at least $\geq\frac{1}{2M}$,
  with multiplicity $M$, not allowing the R\'enyi entropy of
  any subset to become larger than $\log M + \frac{1}{1-\frac{1}{\a}}$.
  Turning to the second term, note that in the tensor product over $J\in\cL$,
  the distributions are designed such that for $J\subset I$,
  the distribution $\bigl( Q_{J:M} \otimes \delta_{J^c} \bigr)_I$ is uniform on
  $|J|M$ elements, whereas for $J\not\subset I$, it has at least
  one value $\geq\frac{1}{|J|}$.

  Thus,
  \[
    H_\a(Q_I) \ \begin{cases}
                  \leq \frac{1}{1-\frac{1}{\a}}\log(2n^k) & \text{ for } I \not\in \cU, \\
                  \geq 1+\log M                           & \text{ for } I \in \cU,
                \end{cases}
  \]
  and we are done.
\end{proof}

\section{Epilogue}
\label{sec:epi}
We have carried out an analysis of the inequalities obeyed by 
quantum R\'{e}nyi entropies in multi-partite systems, 
in analogy to the very deep ongoing programme for the von Neumann entropy. 
In the quantum case, our findings can be summarized concisely as saying 
that apart from trivial non-negativity of individual entropies there
are no inequalities obeyed by the R\'{e}nyi $\alpha$-entropies of a
multipartite state, when $0 < \alpha < 1$. For $1 < \alpha \leq \infty$ there are no
other homogeneous inequalities, but the set of attainable entropic
vectors is not a cone, meaning that there are further, non-homogeneous
inequalities. In the classical case (and more broadly that of separable
quantum states) there is furthermore monotonicity in the sense that
a smaller subset of parties cannot have larger entropy, and we could
show similarly that this is the only homogeneous inequality for
all $\alpha \neq 1$. It is curious to contrast this with the limit
$\alpha = 1$, the von Neumann entropy, which is subject to subadditivity
and strong subadditivity, as well as triangle inequality and weak
monotonicity, all crucial relations for the development of statistical
mechanics and quantum information theory. The classical case has
even more inequalities, due to Zhang and Yeung and subsequent work.

We did not discuss the other limit $\alpha = 0$, for which the R\'{e}nyi
entropy is the logarithm of the rank of the density operator, which
indeed behaves rather differently from the other $\alpha$-entropies:
For one thing, it takes only discrete values in the logarithm of integers,
and it is discontinuous. Furthermore, it is easy to see that it obeys
subadditivity
\[
  S_0(\rho_{I \cup J}) \leq S_0(\rho_I) + S_0(\rho_J),
\]
and it is unknown which other  inequalities (whether homogeneous, linear or other) it
 satisfies. Note however that it definitely does not satisfy strong subadditivity
 \cite{Cadney:personal-communication-june2012}.

We leave a few other open questions and further directions for future
investigations: For instance, we would like to know all necessary
non-homogeneous inequalities to describe the $\overline{\Sigma^n_\alpha}$.
Note that the classical/separable sets $\Gamma^n_\alpha$ for $\alpha
> 1$ are not cones either, but what about $0 < \alpha < 1$? Finally,
and perhaps most interestingly in the light of recovering what rich
structure is known for $\alpha = 1$, can we extend the present investigation
to relations between R\'{e}nyi entropies for different $\alpha$? For
example, it is well-known that for $\alpha < \beta$,
$S_\alpha(\rho) \geq S_\beta(\rho)$, or $S_\alpha(\rho_{AB}) \leq
S_\alpha(\rho_B) + S_0(\rho_B)$, but we do not know which other
inequalities, if any, exist.

\acknowledgments
We thank Marcus Huber and Josh Cadney for conversations on R\'{e}nyi entropies,
and especially Josh Cadney for providing a counterexample showing that the 
$0$-entropy does not obey strong subadditivity.

NL and AW are supported by the European Commission (STREP ``QCS'' and
IP ``QESSENCE''). 
MM is supported by a Marie Curie Fellowship (``QUANTSTAT'') of the European
Commission.
AW received furthermore support from the ERC (Advanced Grant ``IRQUAT''),
a Royal Society Wolfson Merit Award and a Philip Leverhulme Prize.
The Centre for Quantum Technologies is funded by the Singapore
Ministry of Education and the National Research Foundation as part
of the Research Centres of Excellence programme.

\end{document}